\documentclass[a4paper]{article}

\usepackage{RR}

\usepackage{graphicx,subfigure} 
\usepackage{algorithm}
\usepackage[isolatin]{inputenc}
\usepackage[OT1]{fontenc}                       
\usepackage{latexsym}
\usepackage{epsfig}
\usepackage{fancybox}

\newenvironment{proof}{{\bf Proof. } }{{\hfill $\Box$}\vspace{.5pc}}
\floatstyle{ruled}
\newfloat{Algorithm}{thp}{lop}[section]
\newcommand{\qed}{\hfill$\Box$}

\newtheorem{theorem}{Theorem}

\newtheorem{lemma}{Lemma}

\newtheorem{corollary}{Corollary}

\RRNo{6880}

\RRdate{Mars 2009}

\RRtitle{L'Auto-stabilisation d'un Ensemble Maximal Indépendant dans les Réseaux Unidirectionels est Difficile}
\RRetitle{Stabilizing Maximal Independent Set in Unidirectional Networks is Hard}

\titlehead{Stabilizing Maximal Independent Set in Unidirectional Networks is Hard}

\RRauthor{Toshimitsu Masuzawa\thanks{Osaka University, Japan}
\and
Sébastien Tixeuil\thanks{Université Pierre \& Marie Curie - Paris 6, LIP6-CNRS \& INRIA Grand Large, France}}

\RRresume{Un algorithme distribué est auto-stabilisant si après que des fautes et des attaques ont frappé le système et l'ont placé dans un état global arbitraire, le système récupère en temps fini un fonctionnement correct sans intervention extérieure. Dans cet article, nous considérons le problème de la construction auto-stabilisante d'un \emph{ensemble maximal indépendant} dans des réseaux uniformes et unidirectionels quelconques. Nous présentons un résultat négatif qui indique que dans les réseaux uniformes, l'auto-stabilisation \emph{déterministe} de ce problème est \emph{impossible} à résoudre. De plus, la propriété de \emph{silence} (\emph{i.e.} garantir qu'à partir d'un point de chaque exécution, les communications entre les n\oe uds du réseau sont fixées) est impossible à garantir, tant pour les variantes deterministes que probabilistes des protocoles.

Nos résultats positifs sont multiples. Nous présentons un protocole déterministe pour les réseaux unidirectionels \emph{identifiés} quelconques qui présente une complexité en temps et en espace qui reste polynomiale, avec un ordonnancement asynchrone. Nous complétons l'étude avec des protocoles probabilistes dans le cas uniforme~: le premier protocole requiert une mémoire infinie mais supporte un ordonnancement asynchrone, le deuxième protocole utilise une mémoire polynomiale mais requiert un ordonnancement synchrone. Les deux protocoles ont une compexité moyenne en temps polynomiale.}

\RRabstract{A distributed algorithm is self-stabilizing if after faults and attacks hit the system and place it in some arbitrary global state, the system recovers from this catastrophic situation without external intervention in finite time. In this paper, we consider the problem of constructing self-stabilizingly a \emph{maximal independent set} in uniform unidirectional networks of arbitrary shape. On the negative side, we present evidence that in uniform networks, \emph{deterministic} self-stabilization of this problem is \emph{impossible}. Also, the \emph{silence} property (\emph{i.e.} having communication fixed from some point in every execution) is impossible to guarantee, either for deterministic or for probabilistic variants of protocols.

On the positive side, we present a deterministic protocol for networks with arbitrary unidirectional networks with unique identifiers that exhibits polynomial space and time complexity in asynchronous scheduling. We complement the study with probabilistic protocols for the uniform case: the first probabilistic protocol requires infinite memory but copes with asynchronous scheduling, while the second probabilistic protocol has polynomial space complexity but can only handle synchronous scheduling. Both probabilistic solutions have expected polynomial time complexity.}

\RRmotcle{Syst\`{e}mes distribu\'{e}s, Algorithme distribu\'{e}, Ensemble Maximal Indépendant, Réseaux Unidirectionels, Auto-stabilisation, Auto-stabilisation probabiliste}
\RRkeyword{Distributed systems, Distributed algorithm, Maximal Independent Set, Unidirectional Networks, Self-stabilization, Probabilistic self-stabilization}

\RRprojet{Grand large}
\RRtheme{\THNum}
\URFuturs

\begin{document}

\makeRR
\section{Introduction}

One of the most versatile technique to ensure forward recovery of distributed systems is that of \emph{self-stabilization}~\cite{D74j,D00b}. A distributed algorithm is self-stabilizing if after faults and attacks hit the system and place it in some arbitrary global state, the system recovers from this catastrophic situation without external (\emph{e.g.} human) intervention in finite time. 

The vast majority of self-stabilizing solutions in the literature~\cite{D00b} considers \emph{bidirectional} communications capabilities, \emph{i.e.} if a process $u$ is able to send information to another process $v$, then $v$ is always able to send information back to $u$. This assumption is valid in many cases, but can not capture the fact that asymmetric situations may occur, \emph{e.g.} in wireless networks, it is possible that $u$ is able to send information to $v$ yet $v$ can not send any information back to $u$ ($u$ may have a wider range antenna than $v$). Asymmetric situations, that we denote in the following under the term of \emph{unidirectional} networks, preclude many common techniques in self-stabilization from being used, such as preserving local predicates (a process $u$ may take an action that violates a predicate involving its outgoing neighbors without $u$ knowing it, since $u$ can not get any input from its outgoing neighbors).

\paragraph{Related works}

Self-stabilizing solutions are considered easier to implement in bidirectional networks since detecting incorrect situations requires less memory and computing power~\cite{BDDT07j}, recovering can be done locally~\cite{AD02j}, and Byzantine containment can be guaranteed~\cite{MT06cb,MT07j,NA02c}.

Investigating the possibility of self-stabilization in unidirectional networks was recently emphasized in several papers~\cite{AB98j,CG01c,DDT99j,DDT06j,DS04j,DT01jb,DT02j,DT03j,BDGT09c}. However, topology or knowledge about the system varies:~\cite{DDT99j} considers \emph{acyclic} unidirectional networks, where erroneous initial information may not loop; \cite{AB98j,CG01c,DT02j,DS04j} assume \emph{unique identifiers} and \emph{strongly connected} so that global communication can be implemented; ~\cite{DDT06j,DT01jb,DT03j} makes use of \emph{distinguished processes} yet operate on arbitrary unidirectional networks. 

Tackling arbitrary \emph{uniform} unidirectional networks in the context of self-stabilization proved to be hard. In particular, \cite{BDGT09c,BDGT08r} studied the self-stabilizing vertex coloring problem in unidirectional uniform networks (where adjacent nodes must ultimately output different colors). Deterministic and probabilistic solutions to the vertex coloring problem~\cite{GT00c,MFGST06ca} in bidirectional networks have \emph{local} complexity ($\Delta$ states per process are required, and $O(\Delta)$ --resp. $O(1)$-- actions per process are needed to recover from arbitrary state in the case of a deterministic --resp. probabilistic-- algorithm). By contrast, in unidirectional networks, \cite{BDGT09c} proves a lower bound of $n$ states per process (where $n$ is the network size) and a recovery time of at least $n(n-1)/2$ actions in total (and thus $\Omega(n)$ actions per process) in the case of deterministic uniform algorithms, while \cite{BDGT08r} provides a probabilistic solution that remains either local in space \emph{or} local in time, but not both.

\paragraph{Our contribution}

In this paper, we consider the problem of constructing self-stabilizingly a \emph{maximal independent set} in uniform unidirectional networks of arbitrary shape. It turns out that local maximization (\emph{i.e.} maximal independent set) is strictly more difficult than local predicate maintainance (\emph{i.e.} vertex coloring). On the negative side, we present evidence that in uniform networks, \emph{deterministic} self-stabilization of this problem is \emph{impossible}. Also, the \emph{silence} property (\emph{i.e.} having communication fixed from some point in every execution) is impossible to guarantee, either for deterministic or for probabilistic variants of protocols.

On the positive side, we present a deterministic protocol for networks with arbitrary unidirectional networks with unique identifiers that exhibits $O(m \log n)$ space complexity and $O(D)$ time complexity in asynchronous scheduling, where $n$ is the network size and $D$ is the network diameter. We complement the study with probabilistic protocols for the uniform case: the first probabilistic protocol requires infinite memory but copes with asynchronous scheduling (stabilizing in time $O(\log n + \log \ell + D)$, where $\ell$ denotes the number of fake identifiers in the initial configuration), while the second probabilistic protocol has polynomial space complexity (in $O(m \log n)$) but can only handle synchronous scheduling (stabilizing in time $O((n+\ell)\log n)$).

\paragraph{Outline}

The remaining of the paper is organized as follows: Section~\ref{sec:model} presents the programming model and problem specification. Section~\ref{sec:impossible} presents our negative results, while Section~\ref{sec:possible} details the protocols. Section~\ref{sec:conclusion} gives some concluding remarks and open questions.

\section{Preliminaries}
\label{sec:model}

\paragraph{Program model} 

A program consists of a set $V$ of $n$ processes. A process maintains a set of variables that it can read or update, that define its \emph{state}. 
A process contains a set of \emph{constants} that it can read but not update. A binary relation $E$ is defined over distinct processes such that $(i,j) \in E$ if and only if $j$ can read the variables maintained by $i$; $i$ is a \emph{predecessor} of $j$, and $j$ is a \emph{successor} of $i$. The set of predecessors (resp. successors) of $i$ is denoted by $P.i$ (resp. $S.i$), and the union of predecessors and successors of $i$ is denoted by $N.i$, the \emph{neighbors} of $i$. The \emph{ancestors} of process $i$ is recursively defined as follows:  predecessors of $i$ are ancestors of $i$, and ancestors of each predecessor of $i$ are also ancestors of $i$.  The \emph{descendants} of $i$ are similarly defined using successors (instead of predecessors).

For processes $i$ and $j$ in $V$, $d(i,j)$ denotes the \emph{distance} (or the length of the shortest path) \emph{from} $i$ \emph{to} $j$ in the directed graph $(V,E)$. We define, for convenience, the distance as $d(i,i)=0$ and $d(i,j)=\infty$ if $i$ is not reachable to $j$. The \emph{diameter} $D$ is defined as $D=\max \{d(i,j)~ |~ (i,j) \in V \times V, d(i,j) \ne \infty\}$.

An action has the form $\langle name \rangle : \langle guard \rangle \longrightarrow \langle command \rangle$. A \emph{guard} is a Boolean predicate over the variables of the process and its predecessors. A \emph{command} is a sequence of statements assigning new values to the variables of the process. We refer to a variable $v$ and an action $a$ of process $i$ as $v.i$ and $a.i$ respectively. A \emph{parameter} is used to define a set of actions as one parameterized action.  
Notice that actions of a process are completely independent of its successors.

A \emph{configuration} of the program is the assignment of a value to every variable of each process from its corresponding domain. Each process contains a set of actions. 
In some configuration, an action is \emph{enabled} if its guard is \textbf{true} in the configuration, and a process is \emph{enabled} if it has at least one enabled action in the configuration.  A \emph{computation} is a maximal sequence of configurations $\gamma_0, \gamma_1, \ldots$ such that for each configuration $\gamma_i$, the next configuration $\gamma_{i+1}$ is obtained by executing the command of at least one action that is enabled in $\gamma_i$.  Maximality of a computation means that the computation is infinite or it terminates in a configuration where none of the actions are enabled. A program that only has terminating computations is \emph{silent}.

A \emph{scheduler} is a predicate on computations, that is, a scheduler is a set of possible computations, such that every computation in this set satisfies the scheduler predicate.  
We consider only \emph{weakly fair} schedulers, where no process can remain enabled in a computation without executing any action. 
We distinguish three particular schedulers in the sequel of the paper: the \emph{distributed} scheduler corresponds to predicate \textbf{true} (that is, all weakly fair computations are allowed).  The \emph{locally central} scheduler implies that in any configuration belonging to a computation satisfying the scheduler, no two enabled actions are executed simultaneously on neighboring processes.  The \emph{synchronous} scheduler implies that in any configuration belonging to a computation satisfying the scheduler, every enabled process executes one of its enabled actions.

The distributed and locally central schedulers model \emph{asynchronous} distributed systems.  In asynchronous distributed systems, time is usually measured by \emph{asynchronous rounds} (simply called \emph{rounds}).
Let $E= \gamma_0,\gamma_1, \ldots$ be a computation. The first round of $E$ is the minimum prefix of $E$, $E_1=\gamma_0,\gamma_1,\ldots,\gamma_k$, such that every enabled process in $\gamma_0$ executes its action or becomes disabled in $E_1$.  Round $t\ (t\ge 2)$ is defined recursively, by applying the above definition of the first round to $E'=\gamma_k,\gamma_{k+1},\ldots$. Intuitively, every process has a chance to update its state in every round.

A configuration \emph{conforms} to a predicate if this predicate is
\textbf{true} in this configuration; otherwise the configuration \emph{violates} the
predicate. By this definition every configuration conforms to predicate
\textbf{true} and none conforms to \textbf{false}. Let $R$ and $S$ be
predicates over the configurations of the program. Predicate $R$ is
\emph{closed} with respect to the program actions if every configuration of the
computation that starts in a configuration conforming to $R$ also conforms to
$R$. Predicate $R$ \emph{converges} to $S$ if $R$ and $S$ are closed and any
computation starting from a configuration conforming to $R$ contains a
configuration conforming to $S$. The program \emph{deterministically
stabilizes} to $R$ if and only if \textbf{true} converges to $R$. The program
\emph{probabilistically stabilizes} to $R$ if and only if \textbf{true}
converges to $R$ with probability $1$.

\paragraph{Problem specification} 

Each process $i$ defines a function $\mathit{mis}.i$ that takes as input the states of $i$ and its predecessors, and outputs a value in $\{\mathbf{true}, \mathbf{false}\}$. The \emph{unidirectional maximal independent set} (denoted by UMIS in the sequel) predicate is satisfied if and only if for every $i\in V$, either $\mathit{mis}.i = \mathbf{true} \wedge \forall j\in N.i, \mathit{mis}.j = \mathbf{false}$ or $\mathit{mis}.i = \mathbf{false} \wedge \exists j\in N.i, \mathit{mis}.j = \mathbf{true}$.

\section{Impossibility Results in anonymous networks}
\label{sec:impossible}

In this section, we consider anonymous and uniform networks, where processes of the same in-degree execute exactly the same code (note however that probabilistic protocols may exhibit different actual behaviors when making use of a random variable).

\begin{theorem}
\label{th:imp-silent}
There exists no silent self-stabilizing solution for the UMIS problem.
\end{theorem}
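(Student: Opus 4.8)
The plan is to exploit the combination of anonymity, uniformity, and the unidirectional restriction to build a network and an initial configuration from which no silent computation can exist. The key idea: in a silent execution the system must terminate in a UMIS configuration, but a process $i$ computes $\mathit{mis}.i$ only from its own state and that of its \emph{predecessors}, so a process with in-degree $0$ (a source) has a fixed output that depends on nothing. First I would take the simplest possible ``bad'' topology — a unidirectional ring $R_k$ on $k$ anonymous processes of in-degree $1$, so all processes run identical code — and argue about what a terminal (all-guards-false) configuration looks like there.

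The core step is a symmetry/counting argument. Suppose toward a contradiction that a silent self-stabilizing protocol $\PROTO$ exists. Consider the unidirectional ring $R_k$ for a suitably chosen $k$ (I expect $k=3$ or a small non-multiple-of-some-period to work, possibly needing to try a couple of sizes). Consider the fully symmetric configuration $\gamma_0$ in which every process is in the same state $s$. Because all processes are in-degree $1$ and anonymous, they all run the same deterministic code, so from $\gamma_0$ under the synchronous scheduler the computation stays symmetric forever: every reachable configuration is again uniform. A uniform configuration either has all processes enabled (hence the computation is not terminating, contradicting silence on this execution) or all processes disabled, i.e. $\gamma_0$ itself is terminal. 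In the latter case $\gamma_0$ must conform to UMIS; but in a uniform configuration every process has $\mathit{mis}.i$ equal to the same Boolean $b$, and then $\mathit{mis}.i = b$ with all neighbors also $b$ violates UMIS for either value of $b$ (if $b=\mathbf{true}$ two adjacent processes are both in the set; if $b=\mathbf{false}$ no process has a neighbor in the set). So $\gamma_0$ is not terminal, yet the synchronous computation from $\gamma_0$ never reaches a terminal configuration — contradicting silence, since silence requires \emph{every} computation to terminate and the synchronous scheduler is a legal (weakly fair) scheduler.

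For the probabilistic case the same construction works with a small twist: from the uniform $\gamma_0$ on $R_k$, by symmetry of the distribution all processes draw from the same law, and either with positive probability they all make the same move (keeping the configuration uniform and non-terminal, so this execution prefix extends to a non-terminating computation with positive probability) or — more carefully — one shows that no terminal configuration is uniform, so the set of terminal configurations is not closed under the ring's rotation symmetry, while the reachable-configuration distribution from $\gamma_0$ is rotation-invariant; hence a terminal configuration is reached with probability $0$, so with probability $1$ the computation is infinite, again contradicting silence. (Silence is a \emph{sure}, not almost-sure, property, so even probability $0$ of non-termination would suffice, but here we in fact get probability $1$.)

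The main obstacle I anticipate is ruling out the possibility that the protocol escapes symmetry through the \emph{scheduler} rather than through randomness: an asynchronous scheduler could activate processes one at a time and break the ring's symmetry deterministically. The resolution is that silence must hold for \emph{all} weakly fair computations, in particular the synchronous one, and the synchronous scheduler provably preserves uniformity on an anonymous in-degree-regular ring; so it suffices to exhibit this single bad computation. Thus the delicate points are really just (i) verifying that the synchronous computation from $\gamma_0$ genuinely stays uniform (immediate from anonymity plus uniform in-degree) and (ii) the elementary check that no uniform configuration satisfies UMIS — which is where the ``maximality + independence'' tension does the work, exactly the phenomenon the introduction flags as making local maximization harder than local predicate maintenance.
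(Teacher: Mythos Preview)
Your core approach is correct and genuinely different from the paper's. You argue by symmetry preservation on a single unidirectional ring: from a uniform configuration under the synchronous scheduler (with identical random draws in the probabilistic case), every reachable configuration is again uniform; since no uniform configuration satisfies UMIS, the computation either never terminates (contradicting silence) or halts in a non-UMIS terminal configuration (contradicting self-stabilization, because from that configuration the unique computation never reaches UMIS). Any ring size $k\ge 2$ works --- there is no period subtlety to worry about. The paper instead uses a \emph{lifting} argument across two networks: it lets the protocol reach a terminal configuration on a directed $3$-cycle (System~$A$), where exactly one node $a$ has $\mathit{mis}=\mathbf{true}$ and $b,c$ are in states $S',S''$ with $\mathit{mis}=\mathbf{false}$; it then attaches a two-node tail $b',c'$ (in states $S',S''$) to obtain System~$B$. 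Because $b'$ (resp.\ $c'$) has the same in-degree and predecessor state as $b$ (resp.\ $c$), it is also disabled and outputs $\mathbf{false}$, so System~$B$ is terminal yet $c'$ violates UMIS. The paper's route buys robustness: it never mentions the scheduler or the random bits, so the probabilistic case needs no extra argument. Your route buys simplicity: one network, one classical symmetry argument.

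Two points to clean up. First, ``i.e.\ $\gamma_0$ itself is terminal'' should read ``some later uniform $\gamma_t$ is terminal''; the dichotomy is between \emph{eventual} termination and non-termination, not a property of $\gamma_0$ alone. Second, your rotation-invariance sub-argument for the probabilistic case is wrong: the set of terminal configurations \emph{is} closed under ring rotations (if $\gamma$ is terminal so is every rotate of $\gamma$), so a rotation-invariant distribution can perfectly well assign it probability~$1$ (e.g.\ mass $1/k$ on each rotate of a fixed terminal $\gamma$). Drop that branch entirely; the argument you actually need is the one you already gave for the deterministic case, applied verbatim: the synchronous execution in which every process draws the same bit at every step is a \emph{valid} computation (each step has positive probability, and silence is a sure --- not almost-sure --- property), it stays uniform, and you conclude exactly as before.
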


\begin{proof}
Assume there exists such a solution and consider System $A$ as depicted in Figure~\ref{fig:silent}.$(a)$. Since the protocol is silent, it reaches a terminal configuration where exactly one of the three processes, says $a$, has $\mathit{mis}.a = \mathbf{true}$. Now consider the system in Figure~\ref{fig:silent}.$(b)$, with the states of the processes in the tail (that is $b'$ and $c'$) being the same as those in the $3$-cycle (that is $b$ and $c$). Both processes with state $S'$ ($b$ and $b'$) have the same in-degree and the same predecessor; as the one in the cycle ($b$) is silent, the second one ($b'$) is also silent. Both processes with state $S''$ ($c$ ad $c'$) have the same in-degree and the same predecessor state; as the one in the cycle ($c$) is silent, the second one ($c'$) is also silent. As a result, both processes $b'$ and $c'$ in the tail of System $B$ never move. Since the UMIS function is based solely on the current state, in-degree, and predecessor state, the UMIS function returns the same result for both processes $b$ and $b'$ in state $S'$ and for both processes $c$ and $c'$ in state $S''$. So, both processes $b'$ and $c'$ in the tail are not in the UMIS. Overall, System $B$ describes a terminal configuration that is not a maximal independent set (the UMIS predicate does not hold at $c'$).
\end{proof}

\begin{figure}

\centering

\subfigure[System $A$]{\includegraphics[height=3.5cm]{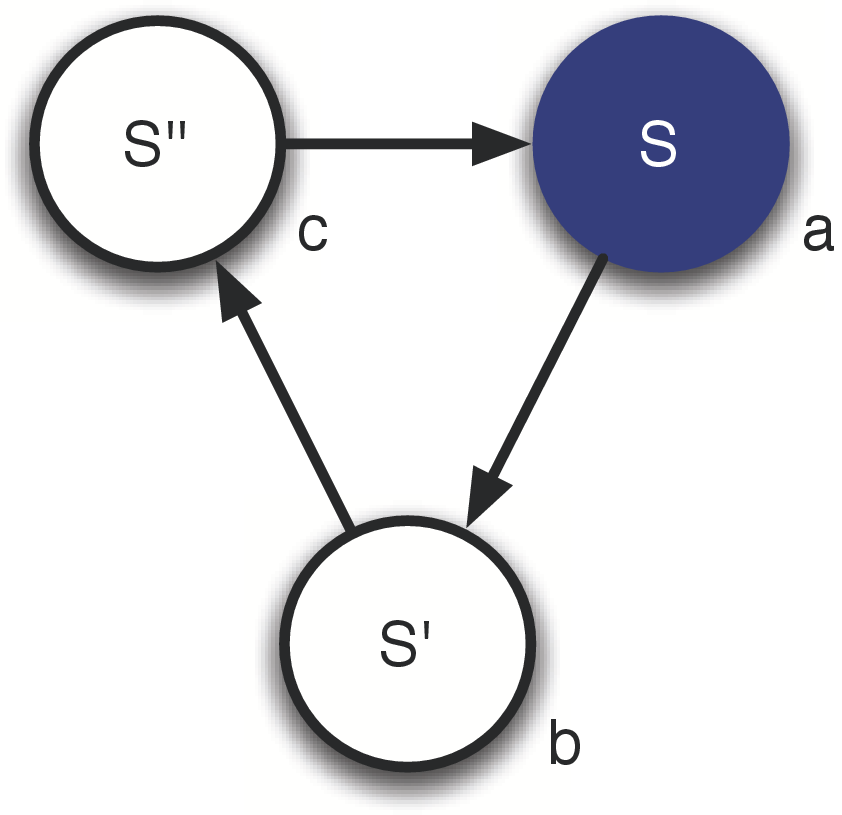}}
\subfigure[System $B$]{\includegraphics[height=3.5cm]{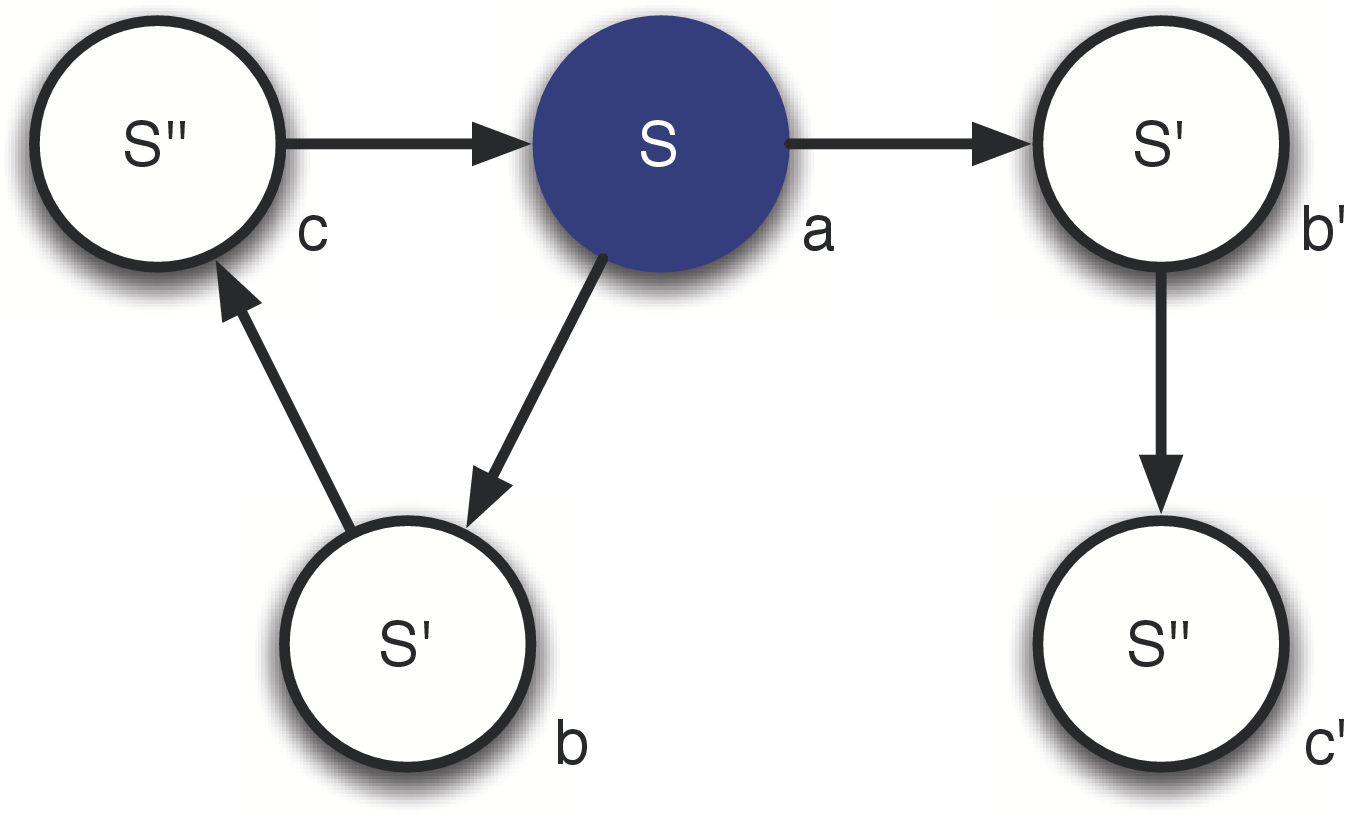}}

\caption{Impossibility of self-stabilizing UMIS}
\label{fig:silent}

\end{figure}

Notice that the impossibility results of Theorem \ref{th:imp-silent} holds even for probabilistic potential solutions. We now prove that relaxing the silence property still prevents the existence of deterministic solutions.

\begin{theorem}
There exists no deterministic self-stabilizing solution for the UMIS problem.
\end{theorem}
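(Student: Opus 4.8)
The plan is to extend the symmetry argument of Theorem~\ref{th:imp-silent} to cover non-silent deterministic protocols by exploiting the fact that in an anonymous uniform network, processes with the same in-degree run identical code. First I would set up a family of ring-like configurations: consider a directed cycle $C_k$ of length $k$ in which all processes are initialized to the \emph{same} state, and contrast it with a ``lollipop'' obtained by attaching a directed tail to one vertex of a similar cycle, with the tail vertices initialized consistently with their cycle counterparts. In the uniform cycle, every process has in-degree $1$ and identical state, so the configuration is symmetric under rotation; since the protocol is deterministic, the scheduler can be chosen to preserve this symmetry, and I would argue (by induction on the number of synchronous steps) that no deterministic execution from the fully symmetric configuration ever breaks symmetry. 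Hence along this particular (synchronous) execution every process of the cycle forever carries the same state, so either all of them output $\mathit{mis} = \mathbf{true}$ or all output $\mathbf{false}$ in every configuration — neither of which is a maximal independent set once $k \ge 3$.

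The subtlety is that self-stabilization only requires convergence under \emph{weakly fair} schedulers, and the symmetric synchronous execution is a legitimate weakly-fair computation, so the above already shows that \textbf{true} does not converge to UMIS: there is a reachable (indeed initial) configuration from which a valid computation never reaches a UMIS configuration. That is the whole contradiction, and I would phrase it exactly that way, mirroring the structure of the silent case but without needing terminality. To make the argument robust I would be careful about one point: the UMIS \emph{output function} $\mathit{mis}.i$ depends on the state of $i$ \emph{and its predecessors}; in the symmetric cycle all predecessors also share the common state, so the output is genuinely the same at every vertex, and the ``all-true'' case violates UMIS at every vertex while the ``all-false'' case violates it as well (no neighbor is in the set). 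So a single uniform cycle of length $\ge 3$ already suffices; the lollipop construction from Theorem~\ref{th:imp-silent} is not even needed here, though one may mention it to unify the exposition.

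The main obstacle — and the step deserving the most care — is justifying that a deterministic protocol cannot escape the symmetric configuration: one must verify that ``same in-degree $\Rightarrow$ same code'' really forces identical transitions, and that the synchronous scheduler applied to a rotation-symmetric configuration yields a rotation-symmetric successor configuration (here the determinism of the command, i.e. the absence of random choices, is essential — this is precisely where the argument fails for probabilistic protocols, consistent with the positive results announced later). A secondary point to handle cleanly is weak fairness: the synchronous scheduler is trivially weakly fair, so the constructed counterexample computation is admissible. I would also remark that the argument is insensitive to whether one uses the distributed, locally central, or synchronous scheduler for the \emph{negative} direction, since exhibiting one bad admissible computation is enough; choosing the synchronous one just makes the symmetry-preservation induction cleanest.
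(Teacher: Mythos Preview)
Your argument is correct, but it is genuinely different from the paper's.  The paper does \emph{not} use the pure symmetry argument on a single cycle.  Instead it reuses the two systems $A$ (a directed $3$-cycle) and $B$ (the same cycle with a two-node tail $b',c'$) from Theorem~\ref{th:imp-silent}: assuming a deterministic self-stabilizing solution, it takes an arbitrary converging computation on $A$, and builds a \emph{sibling} computation on $B$ in which $b'$ (resp.\ $c'$) is scheduled exactly when $b$ (resp.\ $c$) is.  Because $b'$ and $b$ have the same in-degree and the same predecessor state at all times (and likewise $c'$ and $c$), the tail processes forever mirror their cycle counterparts; hence $\mathit{mis}.b'=\mathit{mis}.b$ and $\mathit{mis}.c'=\mathit{mis}.c$, and the UMIS predicate fails at the end of the tail.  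So the paper argues by \emph{indistinguishability between two topologies}, whereas you argue by \emph{symmetry within one topology}.

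Your route is the classical Angluin-style argument and is shorter: a single uniform cycle under synchronous activation already yields a weakly-fair computation that never reaches UMIS, so the lollipop is indeed unnecessary for the deterministic case.  What the paper's approach buys is uniformity with Theorem~\ref{th:imp-silent}: there the result must also cover \emph{probabilistic} silent protocols, where your symmetry-preservation induction breaks (coin flips can desynchronize the cycle), so the two-system indistinguishability argument is genuinely needed; the authors then simply recycle the same picture for the deterministic non-silent case.

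One small inaccuracy in your closing remark: the synchronous computation you exhibit has all neighbours moving simultaneously, so it is \emph{not} admissible under the locally central scheduler.  Your proof therefore yields impossibility under the distributed and synchronous schedulers (the former because synchronous computations are weakly fair), but would need a separate argument for the locally central case.  The paper's mirrored-execution construction, by contrast, can be made locally central on $B$ whenever the driving schedule on $A$ is, since $b,b'$ (resp.\ $c,c'$) are non-neighbours in $B$.  This is a minor point and does not affect the correctness of your main argument for the theorem as stated.
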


\begin{proof}
Assume there exists such a solution and consider the two systems $A$ and $B$ that are depicted in Figure~\ref{fig:silent}.
We consider a computation of system $A$, that eventually ends up in a stable output of the $\mathit{mis}$ function for all processes $a$, $b$, and $c$ ($a$ being the one process with $\mathit{mis}.a=\mathbf{true}$), and construct a sibling execution in System $B$ as follows:
\begin{itemize}
\item processes $b$ and $b'$ (resp. $c$ and $c'$) in System $B$ have the same 
initial states as $b$ (resp. $c$) in System $A$,
\item anytime process $b$ (resp. $c$) is executed in System $A$, both processes $b$ and $b'$ (resp. $c$ and $c'$) are executed in System $B$, 
\item anytime $a$ is executed in System $A$, $a$ is also executed in System $B$.
\end{itemize}

Now, at any time, in System $B$, both processes $b$ and $b'$ are in the same state, with the same predecessors' states. As a result, the output of their $\mathit{mis}$ function is the same. The same holds for processes $c$ and $c'$. Since System $A$ eventually ends up in a configuration from which all $\mathit{mis}$ functions are stable, the same holds for system $B$, where $\mathit{mis}.b'$ and $\mathit{mis}.c'$ both return $\mathbf{false}$. As a result, a UMIS is never constructed in System $B$.  
\end{proof}

%
%
%
%
%
%
%
%
%

\section{Possibility Results}
\label{sec:possible}

The previous impossibility results yield that for the deterministic case, only non uniform networks admit a self-stabilizing solution for the UMIS problem. In section~\ref{sec:deter}, we present such a deterministic solution.

For anonymous and uniform networks, there remains the probabilistic case. We proved that probabilistic yet silent solutions are impossible, so both our solutions are non-silent. The one that is presented in Section~\ref{sec:proba1} performs in asynchronous networks but requires unbounded memory, while the one that is presented in Section~\ref{sec:proba2} performs in synchronous networks and uses $O(m \log n)$ memory per process. 

\subsection{Deterministic solution with identifiers}
\label{sec:deter}

The intuition of the solution is as follows. Every process collects the predecessor information from all of its ancestors using the self-stabilizing approach given in~\cite{DT02j,DH97j,DT03j}.  From the collected information, each process $i$ can reconstruct the exact topology of the subgraph consisting of all the ancestors and $i$ itself. Then, depending on where the process is located, two possibilities can be considered:

\begin{enumerate}
\item The process is in a strongly connected component that includes all of its ancestors. In the directed acyclic graph of strongly connected components, this process is in a \emph{source} component. Then every process in the source component constructs the same topology. The MIS in this source component is constructed for example by giving processes priority in the descending order of identifiers (\emph{i.e.}, the process with maximal identifier has highest priority).
\item The process is in a \emph{non-source} strongly connected component in the same acyclic graph of strongly connected components. 
Then, the same process as in the previous situation repeats, with the additional constraint that stronger priority is given to the processes in the upwards strongly connected components.
\end{enumerate}  

The detailed algorithm is given in Algorithm \ref{alg:det}.

\begin{Algorithm}[htb]
\begin{tabbing}
xxx \= xxx \= xxx \= xxx \= xxx \= \kill
{\tt constants of process} $i$ \\
\> $id_i$: identifier of $i$; \\
\> $P_i$: identifier set of $P.i$; \\
{\tt variables of process} $i$ \\
\> {\it Topology}$_i$: set of $(id, ID, dist)$ tuples; 
   // topology that $i$ is currently aware of. \\
\> \> // $id$: a process identifier \\
\> \> // $ID$: identifier set of $P.(id)$ \\
\> \> // $dist$: distance from $id$ to $i$ in {\it Topology}$_i$. \\
{\tt function} \\
\> {\it update}({\it Topology}$_i$)\\
\> \> {\it Topology}$_i$ := $\{(id_i, P_i, 0)\} \cup 
      \bigcup_{j \in P.i}\{(id,ID,dist+1)~|~(id,ID,dist) \in 
      ${\it Topology}$_j\}$; \\
\> \> {\tt while} $\exists (id, ID, dist), (id', ID', dist') \in$ 
      {\it Topology}$_i$ s.t. $id = id'$ {\tt and} $dist < dist' $ \\
\> \> \> remove $(id', ID', dist')$ from {\it Topology}$_i$; \\
\> \> {\tt while} $\exists (id, ID, dist), (id', ID', dist') \in$ 
      {\it Topology}$_i$ s.t. $id = id'$ {\tt and} $ID \ne ID'$ \\
\> \> \> remove one of them (arbitrarily) from {\it Topology}$_i$; \\
\> \> {\tt while} $\exists (id, ID, dist) \in$ {\it Topology}$_i$ s.t. 
      $id$ is unreachable to $i$ in {\it Topology}$_i$ \\
\> \> \> remove $(id, ID, dist)$ from {\it Topology}$_i$; \\
\> {\it UMIS}$(${\it Topology}$_i)$\\
\> \> {\it WorkingTp}$_i$ := {\it Topology}$_i$; \\
\> \> {\it UMIS}$_i := \emptyset$ \\
\> \> {\tt while} $\exists (id_i, P_i, 0) \in$ {\it WorkingTp}$_i$ \{ \\
\> \> \> Let $W$ be a source strongly connected component of {\it WorkingTp}$_i$; \\
\> \> \> {\tt for each} $id \in W$ in the descending order of identifiers \\
\> \> \> \> {\tt if} {\it UMIS}$_i \cup \{id\}$ is an independent set \\
\> \> \> \> \> {\it UMIS}$_i :=$ {\it UMIS}$_i \cup \{id\};$ \\
\> \> \> {\it WorkingTp}$_i$ := {\it WorkingTp}$_i - W$; \\
\> \> \} \\
\> \> {\tt if} $id_i \in$ {\it UMIS}$_i$ \\
\> \> \> {\tt output} {\tt true}; \\
\> \> {\tt else} \\
\> \> \> {\tt output} {\tt false}; \\
{\tt actions of process} $i$ \\
\> ${\tt true} \longrightarrow$ {\it update}({\it Topology}$_i$);
   {\it UMIS}({\it Topology}$_i);$ 
\end{tabbing}
\caption{Deterministic UMIS algorithm in asynchronous networks with identifiers}
\label{alg:det}
\end{Algorithm}

\begin{lemma}
\label{lem:top-det}
Let $i$ be any process.  At the end of the $k$-th round $(k \ge 1)$ and later, the topology stored in variable {\it Topology}$_i$ is correct up to distance $k-1$:
\begin{enumerate}
\item
for every process $j$ with $d(j,i) \le k-1$, {\it Topology}$_i$ stores 
the correct tuple $(j, P.j, d(j,i))$ of $j$, and 
\item
every tuple $(id, ID, d) \in$ {\it Topology}$_i$ is the correct one
$(j, P.j, d(j,i))$ of some process $j$ if $d \le k-1$.
\end{enumerate}
\end{lemma}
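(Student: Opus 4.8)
The plan is to prove the two items of the lemma simultaneously by induction on $k$, exploiting the fact that in this protocol every process is permanently enabled (the guard is \textbf{true}), so that during each round every process executes the \textit{update}/\textit{UMIS} action at least once. Concretely I would prove the equivalent closure statement: for every process $i$ and every configuration occurring at or after the end of round $k$, variable \textit{Topology}$_i$ is ``correct up to distance $k-1$'' in the precise sense of items 1 and 2. It is cleanest to start the induction at $k=0$ — reading ``end of round $0$'' as the initial configuration — where the statement is vacuously true (no process lies at distance $\le -1$, and, since the $dist$ component ranges over the nonnegative integers, no tuple has $dist\le -1$), so that the case $k=1$ is just an instance of the inductive step.

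For the inductive step, assume the statement for $k$ and consider any execution of \textit{update} by $i$ at a configuration $\gamma$ at or after the end of round $k$; in particular, any execution of $i$ during round $k+1$ qualifies, since round $k+1$ begins at the end of round $k$. By the induction hypothesis applied to each predecessor $p\in P.i$, the value of \textit{Topology}$_p$ that $i$ reads at $\gamma$ is already correct up to distance $k-1$. I would then walk through the body of \textit{update} in order (the union, the distance-pruning loop, the $ID$-pruning loop, the unreachability loop). For item 1, fix $j$ with $d(j,i)=m\le k$. If $m=0$ then $j=i$ and the freshly inserted $(id_i,P_i,0)$ does the job: it is trivially reachable to $i$, it has minimal distance, and no competing distance-$0$ tuple is ever created because every tuple imported from a predecessor has $dist\ge 1$. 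If $m\ge 1$, take an in-neighbour $j'$ of $i$ on a shortest $j\to i$ path; then $j'\in P.i$, $d(j,j')=m-1\le k-1$, so $(j,P.j,m-1)\in$\textit{Topology}$_{j'}$ by the induction hypothesis, hence $(j,P.j,m)$ appears after the union. To see it survives the distance loop, I would show there can be no $j$-tuple of distance $<m$: such a tuple is imported as $(j,\cdot,d-1)$ with $d-1\le k-1$, so by the induction hypothesis (item 2) it is the correct tuple of $j$, giving $d-1=d(j,p)$ for the predecessor $p$ it came from, whence $m=d(j,i)\le d(j,p)+1=d$, a contradiction; after the distance loop only distance-$m$ $j$-tuples remain, and each (imported from a predecessor, since $m\ge 1$) equals $(j,P.j,m)$ by the induction hypothesis (item 2), so the $ID$ loop leaves it alone. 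For the unreachability loop I would run the same survival argument along every vertex $j_\ell$ ($0\le\ell\le m$) of a fixed shortest path $j=j_m\to\cdots\to j_0=i$, getting that all tuples $(j_\ell,P.j_\ell,\ell)$ survive the first two loops; since $j_\ell\in P.j_{\ell-1}$ is an element of the $ID$-component of $(j_{\ell-1},P.j_{\ell-1},\ell-1)$, these tuples witness a directed path from $j$ to $i$ inside \textit{Topology}$_i$, and a first-offender argument (at the first moment a tuple on the path were removed, the rest of the path would still be present, contradicting unreachability) shows none of them is ever deleted; in particular $(j,P.j,m)$ survives.

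For item 2, take $(id,ID,d)\in$\textit{Topology}$_i$ with $d\le k$ after $i$'s execution. If $d=0$ it must be $(id_i,P_i,0)$, since only the fresh tuple has distance $0$. If $d\ge 1$ it was imported from a predecessor $p$ as $(id,ID,d-1)$ with $d-1\le k-1$, so by the induction hypothesis (item 2) $ID=P.id$ and $d-1=d(id,p)$; in particular $id$ reaches $i$ and $d(id,i)\le d$. If this inequality were strict, then $d(id,i)\le k-1$, so the argument of item 1 (applied within this very execution, for the process $id$) puts the correct shorter tuple $(id,P.id,d(id,i))$ into \textit{Topology}$_i$ right after the union, and the distance loop would then delete $(id,P.id,d)$ — contradicting its presence in the final \textit{Topology}$_i$. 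Hence $d=d(id,i)$ and the tuple is correct. To recover the ``end of round $k+1$ and later'' phrasing: at and after the end of round $k+1$, \textit{Topology}$_i$ equals its value right after $i$'s most recent execution (no other process writes it), that execution occurs at or after the end of round $k$, and the analysis above shows its result is correct up to distance $k$; this is exactly the statement for $k+1$, completing the induction.

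The step I expect to be the main obstacle is the interplay between the two items during the pruning loops: item 2 for $k+1$ relies on item 1 for $k+1$ (to know that an over-long but otherwise-correct tuple coexists with the genuinely shortest one and is therefore pruned), whereas item 1 for $k+1$ needs item 2 only at the smaller index $k$ — so the order in which the two items are argued must be set up carefully — and, inside item 1, certifying that the unreachability loop preserves a correct tuple forces one to carry a whole shortest path of correct tuples at once rather than a single tuple. The remaining ingredients (the union step, the bound $dist\ge 1$ on imported tuples, the atomicity of reads under the distributed scheduler, the fact that \textit{Topology}$_i$ is modified only by $i$) are routine.
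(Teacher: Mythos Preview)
Your proposal is correct and follows essentially the same inductive scheme as the paper: induction on $k$, with item~1 established by pulling the correct tuple of $j$ through a predecessor on a shortest path (using the inductive hypothesis at level $k$ on the predecessors), and item~2 established by tracing a distance-$d$ tuple back to a distance-$(d-1)$ tuple in a predecessor. Your argument is strictly more careful than the paper's in two places the paper leaves implicit --- the survival of the correct tuple through the unreachability-pruning loop (your shortest-path/first-offender argument), and the fact that item~2 at level $k+1$ leans on item~1 at level $k+1$ to force $d=d(id,i)$ --- but these are refinements within the same proof, not a different route.
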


\begin{proof}
We prove the lemma by induction on $k$.
Let us observe first that the lemma holds for $k=1$ (inductive basis): Once $i$ executes its action, {\it Topology}$_i$ always contains $(i, P.i, 0)$ and any other tuple $(id, ID, d)$ satisfies $d \ge 1$.

Assuming that the lemma holds for $k$ (inductive hypothesis),
we now prove the lemma for $k+1$ (inductive step).  
Any process $u$ with $d(u, i) \le k$ satisfies $d(u, j) = d(u, i)-1 \le k-1$ for 
some predecessor $j$ of $i$.
From the inductive hypothesis, {\it Topology}$_j$ contains the correct tuple
$(u, P.u, d(u,j))$ of $u$ at the end of the $k$-th round and later.
Thus, $i$ reads the correct tuple $(u, P.u, d(u,j))$ in {\it Topology}$_j$
and updates its distance correctly at every action in the $(k+1)$-th round 
and later.
The hypothesis also implies that any tuple $(u, ID, d)$ 
contained in {\it Topology}$_{v}$ of any predecessor $v$ of $i$ after the end of 
the $k$-th round satisfies $d \ge d(u, i)-1$ and is correct if $d = d(u, i)-1$. 
Thus, the correct tuple $(u, P.u, d(u,i))$ is never removed from 
{\it Topology}$_i$ in the $(k+1)$-th round or later.  
The first claim of the lemma holds for $k+1$.

Existence of tuple $(id, ID, d)\ (d \ne 0)$ in {\it Topology}$_{i}$ at 
the end of the $(k+1)$-th round or later implies that $i$ reads $(id, ID, d-1)$ 
in {\it Topology}$_{j}$ of some predecessor $j$ of $i$.  From the hypothesis, 
any tuple $(id, ID, d-1)$ contained in {\it Topology}$_{j}$ after the end of 
the $k$-th round is \emph{correct} (or $id$ is an identifier of a really 
existing process, say $v$, $ID$ is the identifier set of $P.v$ and $d=d(v,j)$
holds) if $d-1 \le k-1$.
Thus, any tuple $(id, ID, d)$ contained in {\it Topology}$_{i}$ at the end of 
the $(k+1)$-th round or later is correct if $d \le k$. 
The second claim of the lemma holds for $k+1$.
\end{proof}

The following corollary is derived from Lemma \ref{lem:top-det}.

\begin{corollary}
\label{col:top-det}
Let $i$ be any process and $D(i)$ be the maximum distance
to $i$ from all the ancestors of $i$.  At the end of the $(D(i)+1)$-th round and later, {\it Topology}$_i$ stores the exact topology of the subgraph consisting of all the ancestors of $i$ and $i$ itself.
\end{corollary}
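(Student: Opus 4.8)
The plan is to derive Corollary~\ref{col:top-det} directly from Lemma~\ref{lem:top-det} by instantiating the lemma's bound $k-1$ at the right value, namely $k-1 = D(i)$. First I would observe that $D(i)$, the maximum distance to $i$ from all ancestors of $i$, is a finite quantity: by definition of ancestor, every ancestor $j$ of $i$ satisfies $d(j,i) < \infty$, and since $V$ is finite the maximum over all such $j$ is well-defined (if $i$ has no ancestors, $D(i)$ can be taken to be $0$, and the statement is vacuous apart from the tuple $(i, P.i, 0)$). The key point is that the subgraph induced by all ancestors of $i$ together with $i$ itself is exactly the set of processes $j$ with $d(j,i) \le D(i)$, together with their predecessor sets $P.j$.

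Next I would apply Lemma~\ref{lem:top-det} with $k = D(i)+1$. At the end of the $(D(i)+1)$-th round and later, claim~(1) of the lemma gives that for every process $j$ with $d(j,i) \le D(i)$, {\it Topology}$_i$ contains the correct tuple $(j, P.j, d(j,i))$; since every ancestor of $i$ has distance at most $D(i)$ to $i$, and $i$ itself has distance $0 \le D(i)$, this means {\it Topology}$_i$ contains the correct tuple for every vertex of the ancestor-subgraph. Conversely, claim~(2) of the lemma gives that every tuple $(id, ID, d) \in$ {\it Topology}$_i$ with $d \le D(i)$ is the correct tuple of some actual process; and since any ancestor reaches $i$ within distance $D(i)$ by the definition of $D(i)$, no tuple with $d > D(i)$ can correspond to a genuine ancestor — so I must additionally argue that {\it Topology}$_i$ contains \emph{no} tuple with $d > D(i)$.

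The main obstacle, then, is precisely this last point: ruling out spurious tuples at distance strictly greater than $D(i)$. One could worry that, even after stabilization, some tuple $(id, ID, d)$ with $d > D(i)$ lingers because it keeps being re-derived from a predecessor's table. The resolution uses the third {\tt while}-loop of the {\it update} function, which removes any tuple $(id, ID, d)$ whose $id$ is unreachable to $i$ within {\it Topology}$_i$: once the genuine structure up to distance $D(i)$ is present and correct (by the two claims just invoked), any residual tuple at larger distance would have to be reachable through genuine vertices, hence would describe a path to $i$ of length exceeding $D(i)$ through an ancestor — contradicting the maximality in the definition of $D(i)$ — unless it is simply pruned as unreachable. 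I would phrase this carefully: after round $D(i)+1$, every tuple surviving in {\it Topology}$_i$ either has $d \le D(i)$ (and is then correct by claim~(2)), or has $d > D(i)$; in the latter case the corresponding $id$ cannot be an ancestor of $i$ at distance $d$, so the reachability-pruning loop eliminates it, and by closure it is never reintroduced since its reintroduction would require a predecessor to hold such a tuple, which the same argument applied one step earlier forbids.

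Putting these together, after the $(D(i)+1)$-th round {\it Topology}$_i$ consists of exactly the tuples $(j, P.j, d(j,i))$ for $j$ ranging over $i$ and its ancestors, which is precisely the exact topology of the subgraph induced by $i$ and all its ancestors (vertices, their in-neighbor identifier sets, and hence all induced edges). This is the claimed statement, so the corollary follows. I expect the write-up to be short — two or three sentences quoting the two claims of Lemma~\ref{lem:top-det} at $k = D(i)+1$ — with the only genuinely new ingredient being the remark that the reachability-pruning loop forbids tuples at distance beyond $D(i)$ from persisting.
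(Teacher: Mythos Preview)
Your proposal is correct and follows essentially the same line as the paper's proof: instantiate Lemma~\ref{lem:top-det} at $k = D(i)+1$, read off the two claims to obtain correctness of all tuples with $d \le D(i)$, and then invoke the reachability-pruning loop of \textit{update} to eliminate any residual tuple at distance $> D(i)$. The paper's write-up is more compressed (three sentences), but the content and the one nontrivial observation---that tuples beyond distance $D(i)$ are unreachable to $i$ in {\it Topology}$_i$ and hence removed---are identical to what you outline.
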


\begin{proof}
Concerning {\it Topology}$_i$ at the end of the $(D(i)+1)$-th round and later, Lemma \ref{lem:top-det} shows that the correct tuple $(u, P_u, d(u,i))$ of every ancestor $u$ of $i$ is contained, and any tuple $(id, ID, d)$ with $d \le D(i)$ is correct.
This implies that {\it Topology}$_i$ at the end of the $(D(i)+1)$-th round and later can contain no tuple $(id, ID, d)$ with $d > D(i)$ since the process with identifier $id$ is not reachable to $i$ in {\it Topology}$_i$ and such a tuple is removed from {\it Topology}$_i$ if exists.  Thus the corollary holds.
\end{proof}

\begin{theorem}
\label{th:top-det}
Algorithm \ref{alg:det} presents a self-stabilizing deterministic UMIS algorithm
in asynchronous networks with identifiers.  Its convergence time is $D+1$
rounds where $D$ is the diameter of the network, and the memory space used at each node is $O(m \log n)$ bits.
\end{theorem}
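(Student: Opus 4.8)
The plan is to split the statement into three parts: a closed \emph{legitimate predicate} reached within $D+1$ rounds, the fact that under this predicate every $\mathit{mis}$ output is frozen and the outputs form a maximal independent set, and the space bound. For the first part I would take the predicate $R$ asserting that, for every process $i$, the variable $\mathit{Topology}_i$ equals the subgraph induced on $\{i\}\cup(\text{ancestors of }i)$, with correct predecessor sets and distances. By Corollary~\ref{col:top-det} and the fact that $D(i)\le D$ for every $i$, every computation reaches $R$ within $D+1$ rounds. Closedness of $R$ is a direct check of function {\it update}: if all $\mathit{Topology}_j$ are exact, the re\-assignment produces $\{(id_i,P_i,0)\}$ together with $(u,P.u,d(u,j)+1)$ for every predecessor $j$ of $i$ that $u$ reaches; the distance loop then keeps $d(u,i)=\min_j(d(u,j)+1)$, the $ID$-loop deletes nothing, and the reachability loop deletes nothing since every ancestor $u$ of $i$ reaches $i$ along edges recorded (through the $P.\!\cdot$ fields) in $\mathit{Topology}_i$. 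Hence $R$ is stable and, from the first configuration in $R$ onward, every $\mathit{mis}$ output is constant.

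For the second part, introduce the \emph{canonical greedy independent set} $M$ of the whole digraph $(V,E)$: order the strongly connected components by any linear extension of the ``source-first'' order of the component DAG; within a component consider vertices in decreasing identifier order; insert a vertex precisely when the current set stays independent. Since there is no edge between two incomparable components, the chosen linear extension is irrelevant and $M$ is well defined; and $M$ is a maximal independent set by the standard greedy argument (a vertex is rejected only because some neighbour already lies in $M$, and vertices are never removed). The key structural fact is that membership is \emph{local}: ``$v\in M$'' depends only on the subgraph induced on $\{v\}\cup(\text{ancestors of }v)$. Indeed, when $v$ is considered, any neighbour of $v$ already in the set lies either in a strict ancestor component of $v$ (and is then a predecessor of $v$) or in $v$'s own component with a larger identifier (a predecessor or a successor of $v$); in every case such a neighbour is an ancestor of $v$, so by induction along the component DAG the decision taken at $v$ is a function of $\{v\}\cup(\text{ancestors of }v)$ and the edges among them.

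The main obstacle is to transfer this to the per-process computation {\it UMIS}$(\mathit{Topology}_i)$, which runs the greedy procedure on $\mathit{Topology}_i=G[\{i\}\cup(\text{ancestors of }i)]$ rather than on all of $(V,E)$. Two observations close the gap. First, the vertex set of $\mathit{Topology}_i$ is predecessor-closed, so $\mathit{Topology}_i$ is an induced subgraph of $(V,E)$ containing all in-edges of its vertices; therefore, for every vertex $v$ it contains, the ancestors of $v$ inside $\mathit{Topology}_i$ are exactly the ancestors of $v$ in $(V,E)$, and, since any directed path between two vertices of a strongly connected component stays inside that predecessor-closed set, the strongly connected components and hence the priority structure of the greedy are unchanged. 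Second, the loop guard {\tt while} $\exists (id_i,P_i,0)\in\mathit{WorkingTp}_i$ stops the procedure exactly when the component of $i$ has been removed, so only the component of $i$ and its ancestor components are ever processed; components incomparable to that of $i$ that happen to be handled earlier have no edge into this part and thus never affect an independence test relevant to $i$. Combining these with the locality of membership, the insertions made by {\it UMIS}$(\mathit{Topology}_i)$ for the component of $i$ and its ancestor components coincide with those of the global greedy run, so $i$ outputs $\mathbf{true}$ iff $i\in M$. As this holds at every process and $M$ is a maximal independent set, $\{\mathit{mis}.i\}_{i\in V}$ satisfies the UMIS predicate in every configuration of $R$.

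For the space bound, in any configuration of $R$ the set $\mathit{Topology}_i$ holds one tuple $(u,P.u,d(u,i))$ per ancestor $u$ of $i$ and one for $i$ itself, i.e.\ at most $n$ tuples; a tuple uses $O(\log n)$ bits for $id$ and $dist$ and $O(|P.u|\log n)$ bits for $P.u$, whence $|\mathit{Topology}_i|=O\!\left(\sum_u(|P.u|+1)\log n\right)=O((n+m)\log n)=O(m\log n)$ bits. Capping the variable domain at this size at all times is harmless, since any tuple with $dist\ge n$ or with an identifier unreachable to $i$ is necessarily garbage and is already discarded by {\it update}. Assembling the three parts yields the claimed convergence time of $D+1$ rounds, the correctness of Algorithm~\ref{alg:det} as a self-stabilizing deterministic UMIS algorithm, and the $O(m\log n)$-bit memory bound.
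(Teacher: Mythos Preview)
Your proof is correct and follows essentially the same route as the paper's: use Corollary~\ref{col:top-det} to reach, within $D+1$ rounds, the predicate where every $\mathit{Topology}_i$ is exact, and then argue that the per-process greedy on $\mathit{Topology}_i$ agrees with the global greedy because membership of $i$ depends only on the ancestor subgraph. The paper compresses the second step into a single sentence (``whether process $i$ is selected \ldots\ depends only on the topology of the subgraph consisting of all ancestors of $i$''), whereas you spell out why the strongly connected components, the source-first order, and the independence tests are all preserved when restricting to the predecessor-closed set $\{i\}\cup(\text{ancestors of }i)$; your explicit check that $R$ is closed under \textit{update} is likewise omitted in the paper. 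One harmless redundancy: in $\mathit{Topology}_i$ every component is an ancestor of $i$'s component, so the case ``components incomparable to that of $i$'' you mention is vacuous.
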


\begin{proof}
Let $Topology$ be the exact topology of the network.
It is obvious that {\it UMIS}({\it Topology}) correctly finds a UMIS when executed until ${\it WorkingTP} = \emptyset$ holds.  When {\it Topology}$_i$ stores the exact topology of the subgraph consisting of all ancestors of $i$, {\it UMIS}({\it Topology}$_i)$ selects $i$ as a member of UMIS iff {\it UMIS}({\it Topology}) selects $i$: whether process $i$ is selected by {\it UMIS}({\it Topology}) depends only on the topology of the subgraph consisting of all ancestors of $i$.
Corollary \ref{col:top-det} guarantees that {\it Topology}$_i$ of every process $i$ stores such exact topology at the end of the $(D+1)$-th round and later, and thus, the theorem holds. As the {\it Topology} variable may end up in containing an entry for every node, the over space needed is $O(m \log n)$ bits per process.
\end{proof}

Notice that Algorithm \ref{alg:det} enables each process $i$ to know eventually the exact topology of the subgraph consisting of all the ancestors of $i$.  Algorithm \ref{alg:det} can be easily extended so that each process can eventually get the exact topology containing the input values of the ancestors if each process has a \emph{static} input value. Such an extension results in a \emph{universal} scheme since it can solve any non-reactive problem that is consistently solvable at each process using the topology and the input values of its ancestors.

Another observation is that Algorithm \ref{alg:det} can easily be modified to become \emph{silent}. For simplicity of our presentation, every process always has an enabled action with guard {\bf true}, and thus, Algorithm \ref{alg:det} is not silent.  But, Algorithm \ref{alg:det} becomes silent by changing the guard so that the action becomes enabled only when {\it Topology}$_i$ needs to be updated.

\subsection{Probabilistic solution with unbounded memory in asynchronous anonymous networks}
\label{sec:proba1}

In this subsection, we present a probabilistic self-stabilizing UMIS algorithm
for asynchronous \emph{anonymous} networks.
The solution is based on a probabilistic unique naming of the processes and a deterministic UMIS algorithm that assumes unique process identifiers. 
In the naming algorithm, each process is given a name variable that can be arbitrary large (thus the unbounded memory requirement). The naming is unique with probability $1$ after a bounded number of new name draws. The new name draw consists in appending a random bit at the end of the current identifier. Each time the process is activated, a new random bit is appended. 
In parallel, we essentially run the deterministic UMIS algorithm presented in the previous subsection.  
The main difference from the previous algorithm is in handling the process identifiers. The variable {\it Topology} of a particular process may contain several different identifiers of a same process since the identifier of the process continues to get longer and longer in every execution of the protocol. To circumvent the problem, we consider two distinct identifiers to be the \emph{same} if one is a prefix of the other, and anytime such same identifiers conflict, only the longest one is retained.
Another difference is that we do not need the distance information.  The distance information is used in the previous algorithm to remove the \emph{fake} tuples $(i, ID, d)$ of process $i$ such that $ID \ne P.i$, which may exist in the initial configuration. In our scheme, tuples with fake identifiers that are prefixes of identifiers of real processes are eventually removed in Algorithm \ref{alg:proba1} since the correct identifier eventually becomes longer than any fake identifier. Other tuples with fake identifiers are eventually disconnected from the constructed subgraph topology.



The details of the algorithm are given in Algorithm \ref{alg:proba1}; only the topology update part is described since the UMIS function is the same as in Algorithm \ref{alg:det}.

\begin{Algorithm}[htb]
\begin{tabbing}
xxx \= xxx \= xxx \= xxx \= xxx \= \kill
{\tt variables of process} $i$ \\
\> $id_i$: identifier (binary string) of $i$; \\
\> $P_i$: identifier set of $P.i$; \\
\> {\it Topology}$_i$: set of $(id, ID)$ tuples; 
   // topology that $i$ is currently aware of. \\
\> \> // $id$: a process identifier \\
\> \> // $ID$: identifier set of $P.(id)$ \\
{\tt function} \\
\> {\it update(Topology}$_i$)\\
\> \> $id_i :=$ {\it append}($id_i$, {\it random\_bit}); 
      // append a random bit to the current id \\
\> \> {\it Topology}$_i$ := $\{(id_i, P_i)\} \cup 
      \bigcup_{j \in P.i}${\it Topology}$_j$; \\
\> \> {\tt while} $\exists (id, ID), (id', ID') \in$ 
      {\it Topology}$_i$ s.t. $id'$ is a prefix of $id$\\
\> \> \> remove $(id', ID')$ from {\it Topology}$_i$; \\
\> \> {\tt while} $\exists (id, ID) \in$ {\it Topology}$_i$ s.t. 
      $id$ is unreachable to $i$ in {\it Topology}$_i$ \\
\> \> \> remove $(id, ID)$ from {\it Topology}$_i$; 
\end{tabbing}
\caption{Probabilistic UMIS algorithm in asynchronous anonymous networks}
\label{alg:proba1}
\end{Algorithm}

\begin{theorem}
\label{th:proba1}
Algorithm \ref{alg:proba1} presents a self-stabilizing probabilistic UMIS algorithm in asynchronous anonymous networks.  Its expected convergence time is $O(\log n + \log \ell + D)$ rounds where $D$ is the diameter of the network and $\ell$ is the number of fake identifiers in the initial configuration.
\end{theorem}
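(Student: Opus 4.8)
The proof proceeds in three phases whose expected durations add up to the claimed bound, with the following structural fact used throughout: identifiers only \emph{grow}, since each activation of $i$ appends a fresh uniform bit to $id_i$; thus after $i$ has acted $t$ times its identifier is $id_i^{(0)}$ followed by $t$ mutually independent bits, and two binary strings that differ at a position common to both differ there forever. \emph{Phase 1 (probabilistic unique naming), $O(\log n)$ expected rounds.} Call a pair of real processes $(i,j)$ \emph{conflicting} in a configuration if one of $id_i,id_j$ is a prefix of the other (equal strings included); a non-conflicting pair stays non-conflicting forever. If $(i,j)$ is conflicting at the start of a round, then during that round both $i$ and $j$ act at least once, so the shorter string receives a fresh random bit; since the bit it must match to remain a prefix of the other string is independent of it, the pair becomes permanently non-conflicting with probability at least $1/2$. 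Hence the time for a fixed pair is dominated by a geometric variable, and a union bound over the $n(n-1)/2$ pairs gives that after $O(\log n)$ expected rounds all real identifiers are pairwise incomparable, forever; from then on the relative order of any two real identifiers is frozen (decided by a bit in an immutable common prefix), which is what makes the \emph{UMIS} function deterministic in the sequel.

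\emph{Phase 2 (elimination of fake identifiers), $O(\log\ell)$ further expected rounds.} Fix a fake identifier $f$ of the initial configuration (there are at most $\ell$). If, once Phase~1 is over, $f$ is incomparable with the current identifier of every real process, then inside any $Topology_i$ a fake tuple can be reachable to $i$ only along a path consisting entirely of fake tuples --- a correct real tuple $(id_u,P_u)$ carries the current predecessor identifiers, which are all real --- and since fake tuples are never regenerated, all such paths vanish and $f$ is removed by the reachability cleanup within $D$ further rounds. Otherwise $f=id_u^{(0)}\cdot w$ for some real $u$: as $u$ appends fresh random bits, the first mismatch between them and $w$ makes $id_u$ and $f$ permanently incomparable (the previous case), whereas if all $|w|$ bits match --- probability $2^{-|w|}$ --- then $f$ becomes and stays a prefix of $id_u$ and is deleted by the prefix cleanup whenever it co-occurs with $u$'s tuple. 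In both cases the number of rounds needed to ``resolve'' $f$ has an exponentially decaying tail, so a union bound over the at most $\ell$ fake identifiers yields that after $O(\log\ell)$ additional expected rounds no fake identifier is comparable to any real identifier and none is reachable to any real process; in particular no fake tuple can ever again make the prefix cleanup delete a genuine tuple $(id_i,P_i)$.

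\emph{Phase 3 (topology convergence and correctness of the output), $O(D)$ further rounds.} Once Phases~1 and~2 are over, the set $\widehat T_i$ of all tuples $(id_u,P_u)$ where $u$ is $i$ or an ancestor of $i$ (with current identifiers) is a fixed point of \emph{update} at every $i$, is reachable, and is free of prefix conflicts; the layer-by-layer induction of Lemma~\ref{lem:top-det} and Corollary~\ref{col:top-det} then shows that $Topology_i=\widehat T_i$ at every process after $D+1$ further rounds and thereafter (only the identifiers keep changing; the ancestor subgraph and the identifier order do not). Exactly as in the proof of Theorem~\ref{th:top-det}, \emph{UMIS}$(\widehat T_i)$ depends only on that subgraph and on the now-frozen identifier order, and it agrees with the membership of $i$ in the UMIS computed from the exact global topology; hence $\mathit{mis}.i$ is correct and stable at every $i$. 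Summing the three bounds gives expected convergence time $O(\log n+\log\ell+D)$; being finite, this expectation implies convergence with probability $1$, and closure is immediate because unique naming, absence of fake tuples, and correctness of the output are all preserved by subsequent activations.

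I expect Phase~2 to be the main obstacle: before it is resolved, a fake identifier that has a real identifier as a prefix can make the prefix cleanup remove a legitimate tuple and thereby pollute the propagated topology, so one must argue both that this pollution is transient and that it lasts only $O(\log\ell)$ rounds in expectation, the latter resting on the single-identifier exponential-tail estimate together with a union bound. A secondary subtlety is that identifiers never stabilize, so ``self-stabilization'' here must be understood at the level of the output of the \emph{mis} functions; this forces one to verify explicitly that the data the \emph{UMIS} function actually consumes --- the ancestor subgraph and the order on identifiers --- does eventually become fixed.
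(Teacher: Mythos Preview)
Your proof is correct and follows the same three-phase decomposition as the paper's own proof sketch: unique naming in $O(\log n)$ expected rounds, elimination of fake identifiers in $O(\log\ell)$ further expected rounds, and topology convergence plus output stabilization in $O(D)$ rounds. The only minor difference is that you obtain the Phase~1 bound via a per-pair geometric argument with a union bound, whereas the paper uses a direct birthday-paradox computation; you are also more explicit than the paper about why the identifier order eventually freezes and about the transient pollution caused by fake identifiers, which the paper glosses over.
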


\noindent
{\bf Proof Sketch:} 
It is clear that the identifier of any process eventually becomes distinct from any other's with probability 1.  We first show that every process has a unique identifier in $O(\log n)$ expected rounds.

We consider, as the worst-case scenario, the case where all processes start with the same identifier and each process appends only a single bit to its identifier in every round.
%

The probability that every process has a unique identifier at the end of round $k$ (\emph{i.e.}, $n$ random strings of $k$ bits are mutually distinct) is evaluated as follows when $n$ is small compared to $2^k$:
\[\prod_{i=1}^{n-1}(1-\frac{i}{2^k}) \approx 
  \prod_{i=1}^{n-1}exp(-\frac{i}{2^k}) = exp(-\frac{n(n-1)}{2^{k+1}}) \approx
  exp(-\frac{n^2}{2^{k+1}})\]
We introduce a discrete random variable $X$ to represent the number of rounds required until every process has a unique identifier.  When we consider the execution after round $2 \log n$ to guarantee $n$ is small compared to $2^k$,
the expected number of rounds is then bounded by
\[2 \log n + \sum_{i=2 \log n}^{\infty}Prob(X > i) = 
  2 \log n + \sum_{i=2 \log n}^{\infty}(1- exp(-\frac{n^2}{2^{i+1}}))\]
\[\approx 2 \log n + \sum_{i=2 \log n}^{\infty}\frac{n^2}{2^{i+1}} =
  2 \log n + O(1) \]
Thus, every process has a unique identifier in expected $O(\log n)$ rounds.

Processes may still have same identifiers as those contained in fake tuples.  By a similar argument to the above, we can see additional $O(\log \ell)$ expected rounds are sufficient to give each process an identifier distinct from any fake one.  Then, all the fake identifiers are removed from {\it Topology}$_i$ of each process $i$ since such identifiers either become unreachable to $i$ in {\it Topology}$_i$ or become prefixes of real indentifiers.

After all identifiers become distinct from one another, the topology stored in {\it Topology}$_i$ of each process $i$ becomes stable if the process identifiers are ignored (\emph{i.e.}, only process identifiers get longer and longer).  
On the other hand, once the identifier of a process $u$ becomes lexicographically larger than that of a process $v$, $u$'s identifier is lexicographically larger than $v$'s afterward.  This guarantees that every execution of {\it UMIS}({\it Topology}$_i)$ at process $i$ after some point returns the same result concerning whether process $i$ is a member of the UMIS or not.  
By similar discussion to the proof of Theorem \ref{th:top-det} we can show that additional $O(D)$ rounds are sufficient to get the stable UMIS solution once every process has a unique identifier.  

Consequently, Algorithm \ref{alg:proba1} presents a self-stabilizing probabilistic UMIS algorithm and its expected convergence time is $O(\log n + \log \ell + D)$ rounds.
\qed

\subsection{Probabilistic solution with bounded memory in synchronous anonymous networks}
\label{sec:proba2}

The algorithm in the previous section is based on \emph{global} unique naming, however, self-stabilizing global unique naming in unidirectional networks inherently requires \emph{unbounded} memory.  The goal of this subsection is to achieve, with \emph{bounded} memory, a \emph{local} unique naming that gives each process an identifier that is different from that of any of its ancestors, and to compute a UMIS based on the previously computed local naming.  
Indeed, such a local naming is sufficient for each process to recognize the strongly connected component it belongs to.  Once the component is recognized, a UMIS can be computed by a method similar to that of the deterministic algorithm presented in Section \ref{sec:deter}.

In our scheme to achieve local unique naming, each process extends its identifier by appending a random bit when it finds an ancestor with the same identifier as its own. To be able to perform such a detection, a process needs to distinguish any of its ancestors from itself even when they have the same identifier. The detection mechanism is basically executed as follows: each process draws a random number, and disseminates its identifier together with the random number to its descendants.  When process $i$ receives the same identifier as its own, it checks whether the attached random number is same as its own. If they are different, the process detects that this is a distinct process (that is, a real ancestor) with the same identifier as its own current identifier. When the process receives the same identifier with the same random number as its own for a given period of time, it draws a new random number and repeats the above procedure. Hence, as two different processes eventually draw different random numbers, eventually every process is able to detect an ancestor with the same identifier if such an ancestor exists.

The above method may cause \emph{false detection} (or false positive) when a process receives its own identifier but with an old random number. To avoid such false detection, each identifier is relayed with a distance counter and is removed when the counter becomes sufficiently large.  Moreover, the process repeats the detection checks while keeping sufficiently long periods of time between them.
The details of the self-stabilizing probabilistic algorithm for the local naming are presented in Algorithm \ref{alg:naming}.

\begin{Algorithm}[htb]
\begin{tabbing}
xxx \= xxx \= xxx \= xxx \= xxx \= \kill
{\tt variables of process} $i$ \\
\> $id_i$: identifier (binary string) of $i$; \\
\> $rnd_i$: random number selected from $\{1, 2, \ldots , k \}$; 
   // $k\ (\ge 2)$ is a constant  \\
\> {\it ID}$_i$: set of $(id, rnd, dist)$ tuples; 
   // identifiers that $i$ is currently aware of. \\
\> \> // $id$: a process identifier \\
\> \> // $rnd$: random number of $P.(id)$ \\
\> \> // $dist$: distance that $id$ traverses \\
{\tt function} \\
\> {\it update(ID}$_i$)\\
\> \> {\it ID}$_i$ := $\{(id_i, rnd_i, 0)\} \cup 
      \bigcup_{j \in P.i}\{(id, rnd, dist+1)~|~(id, rnd,dist) \in 
      ${\it ID}$_j\}$; \\
\> \> {\tt while} $\exists (id, rnd, dist) \in$ {\it ID}$_i$ s.t. 
      $dist > |\{id~|~(id, *, *) \in ${\it ID}$_i\}|$; \\
\> \> \> remove $(id, rnd, dist)$ from {\it ID}$_i$; \\
\> \> {\tt if} ${\it timer} > |\{id~|~(id, *, *) \in ${\it ID}$_i\}|$ 
      // timer is incremented by one every round \\
\> \> \> ${\it naming(ID}_i$)\\
\> {\it naming(ID}$_i$)\\
\> \> {\tt if} $\exists (id_i, rnd, *) \in$ {\it ID}$_i$ s.t. $rnd \ne rnd_i$ \\
\> \> \> $id_i :=$ {\it append}($id_i$, {\it random\_bit}); 
      // append a random bit to the current id \\
\> \> $rnd_i :=$ number randomly selected from $\{1, 2, \ldots , k \}$; \\
\> \> {\it reset\_timer}; // reset timer to 0 \\
\> \> {\it update(ID}$_i$);\\
{\tt actions of process} $i$ \\
\> ${\tt true} \longrightarrow$ {\it update(ID}$_i$);
\end{tabbing}
\caption{Probabilistic local naming in synchronous anonymous networks}
\label{alg:naming}
\end{Algorithm}

\begin{lemma}
\label{lem:naming}
Algorithm \ref{alg:naming} presents a self-stabilizing probabilistic local naming algorithm in synchronous anonymous networks.  Its expected convergence time is $O((n + \ell) \log n)$ rounds where $\ell$ is the number of fake identifiers in the initial configuration.
\end{lemma}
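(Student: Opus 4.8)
The plan is to mirror the structure of the proof of Theorem~\ref{th:proba1}: first flush the garbage present in the initial configuration, then verify that the detection mechanism behaves as intended once the garbage is gone, and finally bound the number of random bit‑appends by a birthday‑style estimate. For the flushing step, first I would observe that at every round the set of distinct identifiers occurring in $ID_i$ has size at most $n+\ell$ (at most $n$ identifiers are held by real processes, and the number of fake identifiers never increases), so the threshold used both for the distance‑counter removal and for the timer guard is always at most $n+\ell$. Call a tuple $(id,rnd,dist)\in ID_i$ \emph{traceable} if there is a walk of length exactly $dist$ from some process $w$ (with $w=i$ or $w$ an ancestor of $i$) to $i$ along which $w$ held identifier $id$ and random number $rnd$ precisely $dist$ rounds earlier. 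A non‑traceable tuple has its $dist$ component strictly increased at every round and is never re‑created, so within at most $n+\ell$ rounds its $dist$ exceeds the current threshold and it is removed; a traceable tuple with $dist\le n+\ell$ is, conversely, regenerated every round. I would conclude, as the analogue of Lemma~\ref{lem:top-det}, that after $O(n+\ell)$ rounds the variable $ID_i$ of every process $i$ contains exactly the traceable tuples, and in particular that the shortest‑path copy of each genuine ancestor $v$ persists forever as $(id_v,\cdot,d(v,i))$.

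\textbf{No harmful false detection.} Next I would show that when process $i$ performs a detection test at round $T$, the timer guard forces $i$'s last execution of \emph{naming} — hence its last redraw of $rnd_i$ — to have occurred at some round $t'$ with $T-t'>|\{id\}|$, where $|\{id\}|$ is the threshold in effect at round $T$; and that the removal rule forces every tuple in $ID_i$ at round $T$ to have $dist\le|\{id\}|$, so a traceable tuple bearing $i$'s own identifier was injected at round $s=T-dist\ge T-|\{id\}|>t'$, i.e.\ after $i$'s last redraw, whence it carries $i$'s current random number and does not trigger an extension. Therefore, after the $O(n+\ell)$ rounds above, any extension of $id_i$ is caused by a genuine ancestor $v\neq i$ with $id_v=id_i$ at that moment; extensions caused by fake tuples during the flushing phase are harmless, since appending a bit can never violate local uniqueness. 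The delicate point here, which I expect to be one of the two main obstacles, is that the single quantity $|\{id\}|$ serves both as removal threshold and as timer threshold and is itself changing over time, so one has to check carefully that its transient behaviour cannot produce a chain of spurious self‑detections that would prevent convergence.

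\textbf{Genuine conflicts are resolved quickly.} Assuming after flushing a process $u$ and a genuine ancestor $v\neq u$ with $id_v=id_u$, I would note that $u$ and $v$ both reach the timer guard — hence execute \emph{naming} and redraw their random number — at least once every $O(n+\ell)$ rounds, and that the value drawn by $v$ reaches $u$ within $d(v,u)\le n$ further rounds as the persistent tuple $(id_u,\cdot,d(v,u))$. At $u$'s next detection test the random component of that tuple and $u$'s current random number are essentially independent draws from $\{1,\dots,k\}$, so since $k\ge 2$ they differ with probability at least $1/k$, and then $u$ extends $id_u$. Hence each still‑present conflict at $u$ triggers an append within $O(n+\ell)$ expected rounds.

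\textbf{Counting the extensions and closure.} Finally I would bound the number of extension cycles until no ancestor‑related pair shares an identifier. As in the proof of Theorem~\ref{th:proba1}, the worst case is that all $n$ processes start with the same identifier; treating one extension cycle as the $O(n+\ell)$ rounds within which every currently conflicting process appends a uniformly random bit, the birthday estimate $\prod_{i=1}^{n-1}(1-i/2^{t})\approx\exp(-n^{2}/2^{t+1})$ gives that after $O(\log n)$ cycles all identifiers are mutually distinct with probability approaching $1$, so $O(\log n)$ cycles suffice in expectation. Since extensions at different processes occur within the same cycle and are not serialised, and the flushing phase adds only $O(n+\ell)$ rounds, the expected convergence time is $O((n+\ell)\log n)$. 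Once every process holds an identifier distinct from all of its ancestors', the argument of the second step shows no process ever extends again (the periodic redraws of $rnd$ leave identifiers untouched), so identifiers are frozen and the legitimacy predicate is closed, which yields probabilistic self‑stabilization. The other main obstacle I anticipate is making this counting step rigorous: because conflicting and non‑conflicting processes append bits at different rates, identifiers of different lengths coexist and a shorter one can momentarily "catch up" to a longer one, so the clean lock‑step birthday picture must be replaced by a potential‑function or coupling argument that controls these re‑collisions.
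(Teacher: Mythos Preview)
Your proposal is correct and follows essentially the same approach as the paper's proof sketch: both argue that fake tuples are purged by the distance/timer mechanism, that genuine same-identifier ancestors are eventually detected via differing random numbers, and that the birthday-style estimate from Theorem~\ref{th:proba1} bounds the number of extension cycles by $O(\log n)$, with each cycle costing $O(n+\ell)$ rounds. Your version is considerably more detailed than the paper's sketch---in particular you explicitly work through the no-false-detection argument and flag the changing-threshold and re-collision subtleties---whereas the paper handles correctness via a short minimum-distance contradiction argument and simply asserts the $O((n+\ell)\log n)$ bound by analogy.
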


\noindent
{\bf Proof sketch:}
First we show that the algorithm is a self-stabilizing probabilistic local naming algorithm.  For contradiction, we assume that two processes $i$ and $j$ (where $j$ is an ancestor of $i$) keep a same identifier after a configuration.  Without loss of generality, the distance from $j$ to $i$ is minimum among process pairs keeping same identifiers.  Let $j, u_1, u_2, \ldots , u_m, i$ be the shortest path from $j$ to $i$.  Since all processes in the path have mutually distinct identifiers except for a pair $i$ and $j$, $(id_j, rnd_j)$ is not discarded in the intermediate processes and is delivered to $i$.  Thus, eventually $i$ detects $id_i=id_j$ and $rnd_i \ne rnd_j$.  Then $i$ extends its identifier by adding a random bit, which is a contradiction.

We evaluate the expected convergence time of the algorithm.  By similar argument to the proof of Theorem \ref{th:proba1}, we can show that the expected number of bits added to a process identifier is $O(\log n)$.  Notice that the number $\ell$ of fake identifiers has no influence to the evaluation, for the distance {\it dist} of a fake identifier is larger than the timer value (once the timer is reset) and thus is removed (because of ${\it dist} > |\{ {\it id}~|~({\it id},*,*)\in {\it ID}_i\}|$) when function {\it naming} is executed.  On the other hand, in the scenario where all processes start with a same identifier, the time between two executions of function {\it naming} at a process is $O(n + \ell)$.  Thus, the expected convergence time is $O((n + \ell) \log n)$ rounds.
\qed

\begin{Algorithm}[htb]
\begin{tabbing}
xxx \= xxx \= xxx \= xxx \= xxx \= \kill
{\tt constants of process} $i$ \\
\> $id_i$: identifier of $i$; // distinct from that of any ancestor \\
\> $P_i$: identifier set of $P.i$; \\
{\tt variables of process} $i$ \\
\> {\it umis}$_i$: boolean; // {\tt true} iff $i$ is a UMIS node \\
\> {\it Topology}$_i$: set of ({\it id, ID}) tuples; 
   // topology that $i$ is currently aware of. \\
\> \> // {\it id}: a process identifier \\
\> \> // {\it ID}: identifier set of $P.(id)$ \\
\> {\it Comp}$_i$: identifier set
   // of processes in the strongly-connected component of $i$ \\
{\tt function} \\
\> {\it update(Topology}$_i$)\\
\> \> {\it Topology}$_i := \{(id_i, P_i)\} \cup 
      \bigcup_{j \in P.i}${\it Topology}$_j$; \\
\> {\it UMIS(Topology}$_i$)\\
\> \> {\it Comp}$_i$ := \{{\it id}$~|~${\it id} is reachable from $i$ 
      in {\it Topology}$_i$\}; \\
\> \> \> // set of processes in the strongly connected component of $i$ \\
\> \> $UMIS_v := \emptyset$; \\
\> \> {\tt if} $\exists j \in P.i -$ {\it Comp}$_i$ s.t. 
      {\it umis}$_j =$ {\tt true} {\tt or}
      $\exists j \in$ {\it Comp}$_i$ s.t. ($j > i$ {\tt and} 
      {\it umis}$_j =$ {\tt true}) \{\\
\> \> \> {\it umis}$_i$ := {\it false}; {\tt output} {\tt false}; \\
\> \> \} \\
\> \> {\tt else} \{ \\
\> \> \> {\it umis}$_i$ := {\tt true}; {\tt output} {\tt true}; \\
\> \> \} \\
{\tt actions of process} $i$ \\
\> ${\tt true} \longrightarrow$ {\it update(Topology}$_i$);
   {\it UMIS(Topology}$_i);$ 
\end{tabbing}
\caption{UMIS algorithm in locally-named networks}
\label{alg:proba2}
\end{Algorithm}

Algorithm \ref{alg:proba2} presents a self-stabilizing UMIS algorithm in locally-named networks.  Thus, the \emph{fair composition}\cite{D00b} of the algorithm with the local-naming algorithm in Algorithm \ref{alg:naming} provides a self-stabilizing UMIS algorithm in synchronous anonymous networks.  For simplicity, we omit the code for removing fake initial information in Algorithm \ref{alg:proba2} since such fake initial information can be removed in a similar way to Algorithm \ref{alg:naming}.

\begin{lemma}
\label{lem:component2}
In the algorithm presented in Algorithm \ref{alg:proba2}, each process can exactly recognize the topology of the strongly connected component it belongs to in $O(D)$ rounds where $D$ is the diameter of the network.
\end{lemma}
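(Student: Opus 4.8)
The plan is to reuse the topology-accumulation machinery developed for Algorithm~\ref{alg:det} (Lemma~\ref{lem:top-det}, Corollary~\ref{col:top-det}) and for Algorithm~\ref{alg:naming} (Lemma~\ref{lem:naming}). The target is to show that within $O(D)$ rounds {\it Topology}$_i$ equals exactly the subgraph induced on $i$ together with all ancestors of $i$, after which recognizing the strongly connected component of $i$ is immediate. For the ``every real tuple is eventually present'' part, one argues as in the first claim of Lemma~\ref{lem:top-det} that a correct copy of $(j,P.j)$ travels one hop per round, so it reaches $i$ within $d(j,i)+1$ rounds; since $j$ is an ancestor of $i$ we have $d(j,i)\ne\infty$, hence $d(j,i)\le D$ by the definition of $D$, so all correct ancestor tuples have arrived by round $D+1$. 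Following the analysis of Algorithm~\ref{alg:naming}, one also shows that once enough rounds have elapsed the distance-based purge can no longer delete a correct tuple, so these tuples are permanent from round $O(D)$ on. In particular every process $v$ of the strongly connected component of $i$ (which is an ancestor of $i$) is then recorded with its correct tuple $(v,P.v)$.

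For the ``no spurious tuple survives'' part, I would invoke the fake-removal mechanism that Algorithm~\ref{alg:proba2} is understood to include (the text defers it to the treatment of Algorithm~\ref{alg:naming}): each tuple carries a distance counter, is relayed incremented by one, and is discarded as soon as that counter exceeds the number of distinct identifiers currently recorded; moreover any tuple whose process is unreachable to $i$ is discarded. A fake tuple therefore either becomes disconnected from $i$ in {\it Topology}$_i$ and is removed, or can remain connected only through a chain along which its counter keeps growing past the bounded number of genuinely known identifiers and is then removed; one also checks that deleting such a tuple never breaks a genuine path of {\it Topology}$_i$. As in the proof of Lemma~\ref{lem:naming}, this purge is complete within $O(D)$ rounds, the contribution of the $\ell$ initial fake identifiers being dominated (a real tuple needs at most $D$ hops to reach $i$, while there are at most $n$ real identifiers). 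Hence after $O(D)$ rounds {\it Topology}$_i$ is exactly the subgraph induced on $i$ and all of its ancestors.

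With {\it Topology}$_i$ exact, I would finish by showing that {\it Comp}$_i$ --- the set of identifiers reachable from $i$ in {\it Topology}$_i$ computed in Algorithm~\ref{alg:proba2} --- is precisely the vertex set of the strongly connected component of $i$. One inclusion: if $v$ is in $i$'s component then, the component being strongly connected, there is a directed path from $i$ to $v$ through component vertices only; every such vertex is an ancestor of $i$, hence its correct tuple is in {\it Topology}$_i$, so the path is traceable there and $v\in$ {\it Comp}$_i$. The other inclusion: any identifier reachable from $i$ in {\it Topology}$_i$ is reached along genuine edges, hence is a descendant of $i$, and it occurs in {\it Topology}$_i$, which after the purge contains only ancestors of $i$; being simultaneously a descendant and an ancestor of $i$, it belongs to $i$'s component. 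Finally, from {\it Comp}$_i$ and the stored predecessor sets, $i$ reconstructs the induced subgraph of its component --- for each $v\in$ {\it Comp}$_i$ the edges $u\to v$ with $u\in P.v\cap$ {\it Comp}$_i$ --- which is exactly the topology of the strongly connected component of $i$; and this is available after $O(D)$ rounds, which proves the lemma.

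The step I expect to be the main obstacle is establishing the exactness of {\it Topology}$_i$ within $O(D)$ rounds despite the distance-based purge: one must show, importing the argument used for Algorithm~\ref{alg:naming}, both that the purge never deletes a correct tuple once the front of correct information has advanced far enough, and that it does delete every spurious tuple within $O(D)$ --- rather than merely $O(n)$ --- rounds. Granting that, the remaining steps, namely the one-hop propagation and the identification of {\it Comp}$_i$ with the strongly connected component of $i$, are routine, the latter being in the spirit of Corollary~\ref{col:top-det}.
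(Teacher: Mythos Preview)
Your proposal has a genuine gap: it overlooks the central difficulty that Algorithm~\ref{alg:proba2} runs on top of a \emph{local} naming (Lemma~\ref{lem:naming}), not a global one. Local naming only guarantees that a process has an identifier distinct from each of its \emph{ancestors}; two ancestors $j$ and $j'$ of $i$ that are mutually unreachable may well carry the \emph{same} identifier. Consequently your assertion that ``after $O(D)$ rounds {\it Topology}$_i$ is exactly the subgraph induced on $i$ and all of its ancestors'' is false in general: {\it Topology}$_i$ is that subgraph \emph{with all same-identifier vertices merged}. Your ``other inclusion'' argument then breaks: when you say ``any identifier reachable from $i$ in {\it Topology}$_i$ is reached along genuine edges, hence is a descendant of $i$'', you are implicitly assuming that an edge $(x,y)$ in {\it Topology}$_i$ corresponds to a real edge out of the specific process you reached under identifier $x$. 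But if $x$ is shared by $j$ and $j'$, the edge may come from $j'\in P.y$ while the path from $i$ actually reached $j$, so the alleged path $i\to\cdots\to j\to y$ need not exist in the network.

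The paper's proof is precisely about this merging phenomenon. It defines $G_i$ as the graph on identifiers (the merged graph) and $G'_i$ as the true induced subgraph on $i$ and its ancestors, and shows that reachability \emph{from} $i$ coincides in $G_i$ and $G'_i$. The key observation is that any two ancestors sharing an identifier are mutually unreachable, hence neither can be reachable from $i$ in $G'_i$ (otherwise one would reach the other through $i$); therefore merging is applied only to vertices already unreachable from $i$, and cannot create new reachability from $i$. This is the missing idea in your plan. By contrast, the issues you flag as the main obstacle --- propagation in $O(D)$ rounds and purging of fake initial tuples --- are treated in the paper as routine (the fake-removal code is even omitted from Algorithm~\ref{alg:proba2} and deferred to the mechanism of Algorithm~\ref{alg:naming}).
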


\noindent
{\bf Proof sketch:}
It is obvious that variable {\it Topology}$_i$ of each process $i$ after $D$ rounds consists of tuples $(id, P.(id))$ from all the ancestors of $i$ .  Notice that the local naming allows two distinct processes to have a same identifier if they are mutually unreachable.  Thus, {\it Topology}$_i$ may contain a same tuple $(id, P)$ of two or more distinct processes and/or may contain two tuples $(id, P)$ and $(id, P')$ with a same $id$ but different predecessor sets $P$ and $P'$.

Each process constructs the following graph $G_i = (V_i, E_i)$: $V_i=\{id~|~(id, *)\in {\it Topology}_i\}$ and $E_i=\{(u,v)~|~(v, P)\in {\it Topology}_i$ s.t. $u \in P \}$.  In other words, $G_i$ can be obtained from the actual graph $G$ as follows: First consider the subgraph $G'_i$ induced by the ancestors of $i$ and $i$ itself, and then merge the processes with the same identifier into a single process.

It is obvious that all processes in $G_i$ are reachable to $i$.  What we have to show is that process $j$ is reachable from $i$ in $G_i$ (or $j$ belongs to the strongly connected component of $i$) if and only if $j$ is also reachable from $i$ in $G'_i$.  The if part is obvious since $G_i$ is obtained from $G'_i$ by merging processes.  The only if part holds as follows.  Consider two distinct processes $j$ and $j'$ with a same identifier if exist.  Since they are mutually unreachable but are reachable to $i$, they are unreachable from $i$ in $G'_i$ (otherwise one of them is reachable from the other). In construction of $G_i$ from $G'_i$, merging is applied only to processes \emph{unreachable from} $i$, that is, the merging has no influence on reachability \emph{from} $i$.  Thus, any process unreachable from $i$ in $G'_i$ remains unreachable from $i$ in $G_i$.
\qed

\begin{lemma}
\label{lem:proba2}
Algorithm presented in Algorithm \ref{alg:proba2} is a self-stabilizing (deterministic) UMIS algorithm in (asynchronous) locally-named networks.  Its convergence time is $O(n)$ rounds.
\end{lemma}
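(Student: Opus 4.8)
The plan is to argue that Algorithm~\ref{alg:proba2} is nothing but the classical sequential greedy maximal-independent-set heuristic executed with respect to a fixed vertex ordering $\sigma$, and to show that the \emph{umis} variables converge to (and stay in) the unique configuration that encodes the output of that heuristic. So the proof splits into a \emph{correctness} part (identify the legitimate configuration $\Gamma^\star$ and show it is closed and satisfies the UMIS predicate) and a \emph{convergence} part (show $\Gamma^\star$ is reached within $O(n)$ rounds from any configuration).

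For correctness, I would first define $\sigma$: list the strongly connected components of the condensation DAG in a topological order (source components first), and inside each component list its processes by \emph{decreasing} identifier. Since the network is locally named, the members of one component are mutual ancestors and hence pairwise distinctly named, so $\sigma$ is a genuine total order. The key claim to check is that $\sigma$ is \emph{edge-consistent} with the domination test of the \emph{UMIS} function: the only neighbours of $i$ that can force \emph{umis}$_i=\mathbf{false}$ are $i$'s predecessors lying in strictly upstream components and $i$'s higher-identifier neighbours inside its own component, and all of these strictly precede $i$ in $\sigma$; conversely the test ignores every neighbour that follows $i$ in $\sigma$. Granting this, the standard invariant of sequential greedy MIS — a vertex joins the set iff no already-processed neighbour of it is already in the set, which always yields a \emph{maximal} independent set — shows that the configuration $\Gamma^\star$, in which \emph{Topology}$_i$ is the (merged) exact ancestor topology, \emph{Comp}$_i$ is the exact component of $i$, and \emph{umis}$_i$ is the $\sigma$-greedy decision for $i$, satisfies the UMIS predicate; and $\Gamma^\star$ is closed, because re-evaluating the guarded command in $\Gamma^\star$ reproduces every variable unchanged.

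For convergence I would layer on top of Lemma~\ref{lem:component2}. After $O(D)$ rounds, Lemma~\ref{lem:component2} (together with the stabilisation of \emph{Topology}$_i$, argued exactly as in the proof of Theorem~\ref{th:top-det}, and the removal of fake initial tuples, handled as for Algorithm~\ref{alg:naming}) guarantees that every process permanently holds the correct \emph{Comp}$_i$ and the correct split of its neighbours into in-component and upstream. From that point on, \emph{umis}$_i$ is the $\sigma$-greedy rule applied to the \emph{umis}-values of a set of ancestors of $i$, each strictly preceding $i$ in $\sigma$; hence the ``influence'' relation is a sub-order of $\sigma$ and in particular acyclic, so it decomposes into well-defined layers. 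I would then prove by induction along these layers that once every process on which $i$'s decision transitively depends has committed to its $\Gamma^\star$-value by the end of round $r$, process $i$ commits by round $r+O(1)$ — using only that in every round of a weakly-fair computation each (always-enabled) process executes at least once and its decision is monotone in the $\sigma$-committed values it sees. Since any influence chain ending at a process has length at most $n$, every process is stabilised by round $O(D)+n=O(n)$, and by closure the computation stays in $\Gamma^\star$ forever, which is the claimed bound.

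The main obstacle is the edge-consistency claim of the second paragraph, i.e.\ pinning down precisely what the \emph{UMIS} function uses to evaluate the clause ``$\exists j\in\mathit{Comp}_i,\ j>i\wedge\emph{umis}_j=\mathbf{true}$'' for component members that are not direct predecessors of $i$ (the printed code is schematic here, as the unused assignment $UMIS_v:=\emptyset$ signals), and then checking that the identifier merging inherent in \emph{Topology}$_i$ — which Lemma~\ref{lem:component2} only certifies to preserve reachability \emph{from} $i$ — corrupts neither the greedy decision for $i$ nor the upstream decisions that influence it; once one observes that merging only affects vertices mutually unreachable with (hence strictly downstream of, and thus irrelevant to) $i$, while $i$'s own component and all its strict ancestors carry distinct identifiers, this goes through. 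A secondary, routine point is carrying the round count through the asynchronous (distributed, weakly fair) scheduler rather than the synchronous one, i.e.\ replacing ``one step'' by ``$O(1)$ rounds'' in the induction.
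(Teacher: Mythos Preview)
Your proposal is correct and follows essentially the same route as the paper's proof sketch: first invoke Lemma~\ref{lem:component2} to stabilise component knowledge in $O(D)$ rounds, then argue that the \emph{umis} values stabilise one layer at a time along the order ``source components first, decreasing identifier within a component,'' giving $O(n)$ rounds overall. Your explicit framing via the total order $\sigma$ and the sequential greedy MIS invariant, together with your careful discussion of the edge-consistency check and the identifier-merging subtlety, is more detailed than the paper's terse sketch, but the underlying argument is the same.
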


\noindent
{\bf Proof sketch:}
First from Lemma \ref{lem:component2}, every process correctly recognizes in $O(D)$ rounds all the processes in the same connected component.  Then consider a \emph{source} strongly connected component.   The process with the maximum identifier in the component becomes a \emph{stable} UMIS member.  After that the UMIS outputs of processes in the component become stable one by one in the descending order of identifiers.  It takes at most $O(n')$ rounds until all the processes in the component become stable, where $n'$ is the number of processes in the component.

The same argument can be applied to a source strongly connected component in the graph obtained from $G$ by removing the components with stabilized UMIS outputs.  By repeating the argument, we can show that the UMIS outputs of all the processes become stable in $O(n)$ rounds.  It is clear that the processes with the UMIS outputs of {\bf true} form a UMIS.
\qed

From Lemmas \ref{lem:naming} and \ref{lem:proba2}, the following theorem holds.

\begin{theorem}
\label{th:proba2}
Fair composition of algorithms presented in Algorithm \ref{alg:naming} and Algorithm \ref{alg:proba2} provides a self-stabilizing probabilistic UMIS algorithm in synchronous anonymous networks.  Its expected convergence time is $O((n + \ell) \log n)$ rounds where $\ell$ is the number of fake identifiers in the initial configuration. The space complexity of the resulting protocol is $O(n \log n)$.
\end{theorem}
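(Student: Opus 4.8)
The plan is to derive Theorem~\ref{th:proba2} from Lemmas~\ref{lem:naming} and~\ref{lem:proba2} by a standard \emph{fair composition} argument~\cite{D00b}, with Algorithm~\ref{alg:naming} as the lower layer and Algorithm~\ref{alg:proba2} as the upper layer. First I would verify the non-interference condition that fair composition requires: the actions of Algorithm~\ref{alg:naming} read and write only $id_i$, $rnd_i$ and ${\it ID}_i$, none of which are touched by Algorithm~\ref{alg:proba2}, while Algorithm~\ref{alg:proba2} accesses $id_i$ only as a read-only constant. Hence the naming layer runs obliviously to the UMIS layer, and by Lemma~\ref{lem:naming} it reaches, in expected $O((n+\ell)\log n)$ synchronous rounds and with probability~$1$, a configuration $\gamma^\star$ from which a correct local naming holds and the values $id_i$ never change again.

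From $\gamma^\star$ onwards, each $id_i$ is a legitimate constant in the sense required by Algorithm~\ref{alg:proba2} (it differs from the identifier of every ancestor of $i$), whereas the remaining upper-layer variables (${\it umis}_i$, ${\it Topology}_i$, ${\it Comp}_i$) may still hold arbitrary values. This is precisely the hypothesis of Lemma~\ref{lem:proba2}, so in $O(n)$ further rounds every process reaches a stable output and $\{\,i : {\it umis}_i = \mathbf{true}\,\}$ is a UMIS. Summing the two phases gives an expected convergence time of $O((n+\ell)\log n) + O(n) = O((n+\ell)\log n)$ rounds; convergence holds with probability~$1$ because the first phase does, and the second phase is deterministic once $\gamma^\star$ is reached, which is why the composed protocol is a self-stabilizing \emph{probabilistic} UMIS algorithm in synchronous anonymous networks.

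For the space bound I would account for the two layers separately. In Algorithm~\ref{alg:naming}, the removal rule ${\it dist} > |\{\,{\it id} : ({\it id},*,*)\in {\it ID}_i\,\}|$ caps every stored distance at $n$, so ${\it ID}_i$ holds $O(n)$ tuples, each made of one identifier, a number in $\{1,\dots,k\}$, and a $\Theta(\log n)$-bit distance; together with the fact (from the analysis behind Lemma~\ref{lem:naming}) that identifiers have length $O(\log n)$, this is $O(n\log n)$ bits. In Algorithm~\ref{alg:proba2}, ${\it Topology}_i$ collapses ancestors by identifier and thus keeps at most $n$ entries $(id, {\it ID})$, and ${\it Comp}_i$ is a set of at most $n$ identifiers, again $O(n\log n)$ bits (once the analogue of the fake-tuple cleanup is in place). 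Adding up, the composed protocol uses $O(n\log n)$ bits per process.

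The step I expect to be the main obstacle is not the composition bookkeeping itself but making the two convergence bounds compose cleanly in the probabilistic setting: the $O(n)$ rounds of Lemma~\ref{lem:proba2} must be counted starting from the \emph{random} round at which the naming stabilizes, and one must argue --- using that the upper layer can neither delay nor corrupt the lower layer --- that by linearity of expectation the total expectation is still the sum of the two terms. A secondary delicate point is the space claim: identifiers carry no deterministic length bound, so the $O(n\log n)$ figure should be understood as an expected / high-probability bound, and the argument should make explicit that it is the $dist$-capping in Algorithm~\ref{alg:naming} and the identifier-merging in Algorithm~\ref{alg:proba2} that keep the number of stored tuples linear in $n$ irrespective of the adversarial initial configuration and of $\ell$.
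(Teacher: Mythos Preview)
Your proposal is correct and follows exactly the route the paper takes: the paper states the theorem as an immediate consequence of Lemmas~\ref{lem:naming} and~\ref{lem:proba2} via fair composition and gives no further argument, so your detailed verification of non-interference, the additive time bound, and the per-layer space accounting is in fact more thorough than the paper's own (one-line) justification. Your caveats about the space bound being only expected (since identifier lengths are random) and about summing the two phases via linearity of expectation are well placed and not addressed in the paper.
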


\section{Conclusion}
\label{sec:conclusion}

Although in bidirectionnal networks, self-stabilizing maximal independent set is as difficult as vertex coloring~\cite{GT00c}, this work proves that in unidirectionnal networks, the computing power and memory that is required to solve the problem varies greatly. Silent solutions to unidirectional uniform networks coloring require $\Theta(\log n)$ (resp. $\Theta(\log \delta)$, where $\delta$ denotes the maximal degree of the communication graph) bits per process and have stabilization time $\Theta(n^2)$ (resp. $\Theta(1)$) when deterministic (resp. probabilistic) solutions are considered. By contrast, deterministic maximal independent set construction in uniform networks is \emph{impossible}, and silent maximal independent set construction is \emph{impossible}, regardless of the deterministic or probabilistic nature of the protocols.

While we presented positive results for the deterministic case with identifiers, and the non-silent probabilistic cases, there remains the immediate open question of the possibility to devise a probabilistic solution with bounded memory in asynchronous setting.

Another interesting issue for further research related to \emph{global} tasks. The global unique naming that we present in section~\ref{sec:proba1} solves a truly global problem in networks where global communication is \emph{not} feasible, by defining proper equivalences classes between various identifiers. The case of other classical global tasks in distributed systems (\emph{e.g.} leader election) is worth investigating.

\bibliographystyle{plain}
\bibliography{../../../biblio/biblio}

\end{document}